\newcommand{\pr}[1]{\text{\bf Pr}\normalfont\lbrack #1 \rbrack}
\newcommand{\ex}[1]{\text{\bf E}\normalfont\lbrack #1 \rbrack}
\def\CUT{\mathsf{CUT}}
\def\CROSS{\mathsf{CROSS}}
\def\LINEAR{\mathsf{LINEAR}}
\newcommand{\budget}{\mathsf{budget}}
\newcommand{\GRC}{\mathsf{GRC}}
\newcommand{\GRone}{\mathsf{GR1}}
\newcommand{\GRtwo}{\mathsf{GR2}}
\def\add{\mathsf{add}}
\def\remove{\mathsf{remove}}
\def\calC{\mathcal{C}}
\def\calR{\mathcal{R}}
\def\calB{\mathcal{B}}
\def\calF{\mathcal{F}}
\def\BS{\mathsf{BinSearch}}
\def\Bshouty{\mathsf{BshoutyCW}}
\def\BMdet{\mathsf{WtdGraphReconDetBM}}
\def\Choi{\mathsf{WtdGraphReconChoi}}
\newenvironment{subtheorem}[1]{%
  \def\subtheoremcounter{#1}%
  \refstepcounter{#1}%
  \protected@edef\theparentnumber{\csname the#1\endcsname}%
  \setcounter{parentnumber}{\value{#1}}%
  \setcounter{#1}{0}%
  \expandafter\def\csname the#1\endcsname{\theparentnumber.\Alph{#1}}%
  \expandafter\def\csname theH#1\endcsname{thm.\theparentnumber.\Alph{#1}}%
  \unskip\ignorespaces
}{%
  \setcounter{\subtheoremcounter}{\value{parentnumber}}%
  \ignorespacesafterend
}
\newcounter{parentnumber}
\newtheorem{theorem}{Theorem}[section]
\newtheorem{lemma}[theorem]{Lemma}
\newtheorem{proposition}[theorem]{Proposition}
\newtheorem{corollary}[theorem]{Corollary}
\newtheorem{definition}[theorem]{Definition}
\algrenewcommand\algorithmicrequire{\textbf{Input:}}
\algrenewcommand\algorithmicensure{\textbf{Output:}}
\Crefname{algocf}{Algorithm}{Algorithms}
\def\resdeg{\mathsf{resdeg}}
\def\resedge{\mathsf{resedge}}
\begin{document}

\title{Learning Spanning Forests Optimally using CUT Queries \\in Weighted Undirected Graphs}

\author{Hang Liao \thanks{hang.liao.gr@dartmouth.edu, Department of Computer Science, Dartmouth College} \and 
 Deeparnab Chakrabarty \thanks{deeparnab@dartmouth.edu, Department of Computer Science, Dartmouth College}}

\date{}

\maketitle

\begin{abstract}%
	 In this paper we describe a randomized algorithm which returns a maximal spanning forest of an unknown {\em weighted} undirected graph making 
 $O(n)$ $\CUT$ queries in expectation. For weighted graphs, this is optimal due to a result in [Auza and Lee, 2021] which shows an $\Omega(n)$ lower bound for zero-error randomized algorithms. 
 These questions have been extensively studied in the past few years, especially due to the problem's connections to symmetric submodular function minimization. 
We also describe a simple polynomial time deterministic algorithm that makes $O(\frac{n\log n}{\log\log n})$ queries on undirected unweighted graphs and returns a maximal spanning forest, thereby (slightly) improving upon the state-of-the-art.

\end{abstract}

\section{Introduction}

Learning an unknown graph via queries has been extensively studied for more than two decades. The general setting is this: there is an undirected graph $G = (V,E)$ whose vertices are known but the edge set is unknown, and certain kinds of queries are allowed on this graph. The goal is to reconstruct the graph with as few queries as possible. 
Such active learning questions also have applications in fields such as computational biology~(cf. \citet{GrebinskiK98}), and connections to data summarizations or sketches where the answers to the queries can be thought of as holding the ``relevant information'' about the graph.
More generally, this question falls under the umbrella of {\em combinatorial search} (cf.~\citet{Aigner88,DuH00}) which is a vast area of study that wishes to \emph{``determine an unknown object by means of indirect questions about this object''\footnote{quote from~\citet{GrebinskiK00}}}.

In this paper, we consider {\em cut-query} access to an unknown, undirected, {\em weighted}/multi graph. Every $e\in E(G)$ has an associated $w(e) > 0$. For $e \notin E(G)$, we assume $w(e) = 0$. 
Given a subset $S\subseteq V$ of vertices, let $\partial S$ be the set containing edges with exactly one endpoint in $S$. A $\CUT$ query takes $S\subseteq V$ as input and returns the {\em value} $\sum_{e\in \partial S} w(e)$. 
Instead of focusing on the graph reconstruction question (which has been almost fully resolved; see~\Cref{sec:rel}), we ask in how few queries can one decide if $G$ is connected, or more generally, find a {\em maximal} spanning forest\footnote{We caution the reader that we are not finding a {\em maximum weight} spanning forest with respect to these $w(e)$'s. Finding even a good approximation to the
maximum weight spanning forest can be shown to need $\widetilde{\Omega}(n^2)$ queries whose proof, although not difficult, is out of scope of this paper.}  in $G$. This particular question of connectivity has seen a lot of interest in the recent years~(cf. \citet{RubinsteinSW18,Graur,LeeMS21,AuzaL21,assadi,Apers,ChakrabartyL23}) mainly due to the connections to streaming and sketching, but also due to connections to submodular function minimization (SFM). The cut-function of an undirected graph is a well known (symmetric) submodular function, and such functions can be (non-trivially) minimized using $O(n^3)$ queries (\cite{Quey98}) deterministically, and in $\tilde{O}(n^2)$ queries using randomization (\citet{ChekuriQ}). On the other hand, there is no $\omega(n)$ lower bound known for this question. This has led to recent interest in understanding if for the special case of undirected graph connectivity, can one design $O(n)$-query algorithms, or can a super-linear lower bound be proven.

It is not too hard~(cf.~\citet{Harvey08}, Theorem 5.10) to design  a deterministic $O(n\log n)$ query algorithm to find a spanning forest of an unknown, weighted, undirected graph via mimicking a Prim-style algorithm using a binary-search style idea. To the best of our knowledge (cf. \citet{Apers}, Table 1), this is still the best known result for deterministic algorithms, even for unweighted undirected graphs.
\citet{Apers} gives a randomized, zero-error algorithm for this problem on {\em unweighted} graphs, which makes $O(n)$ queries in expectation. 
However, as argued in~\citet{ChakrabartyL23}, this algorithm used the unweightedness quite crucially, and this latter paper gave an $O(n \log\log n \cdot(\log\log\log n)^2)$-query randomized Monte Carlo algorithm that solved the connectivity question on weighted graphs with constant probability (which is weaker than a zero-error algorithm). It was left open to match \citet{Apers} result
for weighted graphs. The main result of this paper is an affirmative resolution.

\begin{theorem}\label{thm:1}
	Given $\CUT$ query access to an unknown weighted undirected graph $G=(V,E,w)$ with non-negative weights, there is a polynomial time Las Vegas algorithm returning a maximal spanning forest of $G$ that makes $O(n)$ queries in expectation.
\end{theorem}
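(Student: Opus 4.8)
The plan is to run a Bor\r{u}vka-style contraction. We maintain a forest $F\subseteq G$; its connected components we call \emph{super-nodes}, and for each super-node $C$ we keep the cached value $\CUT(C)$. When $\CUT(C)=0$, the super-node $C$ is a genuine connected component of $G$ (it is connected, being a tree of $F$, and has no edge leaving it), so we declare it done; once no non-done super-nodes remain, $F$ is a maximal spanning forest and we output it. Two elementary facts make this efficient. (i) For disjoint vertex sets $S,T$ one has $w(E(S,T))=\tfrac12\bigl(\CUT(S)+\CUT(T)-\CUT(S\cup T)\bigr)$; hence, using the cached values, a constant number of $\CUT$ queries decides whether an edge runs between two disjoint sets, and a binary search over a candidate set then produces an actual edge of $G$ leaving a given super-node in $O(\log n)$ queries. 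Moreover, after merging $A$ and $B$, the cached value is updated for free via $\CUT(A\cup B)=\CUT(A)+\CUT(B)-2w(E(A,B))$. (ii) Because ``done'' is certified by the exact equality $\CUT(C)=0$, the algorithm is never wrong — it is Las Vegas — so only the \emph{expected} number of queries must be bounded.

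The target accounting is a geometric series. If round $i$ starts with $k_i$ non-done super-nodes, we want it to spend $O(k_i)$ queries in expectation and to leave $k_{i+1}\le k_i/2$; since $k_0=n$, the total is then $\sum_i O(k_i)=O(n)$. So the theorem reduces to a single subroutine: \emph{given $k$ non-done super-nodes, spend $O(k)$ $\CUT$ queries in expectation to add to $F$ a forest with $\Omega(k)$ new edges}, i.e.\ to halve the number of live super-nodes, with newly-done super-nodes discarded at unit cost along the way. Observe that every non-done super-node has an outgoing edge, so it is eligible for a merge; it therefore suffices to find an outgoing edge for a constant fraction of them.

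A naive implementation of this subroutine randomly two-colours the super-nodes into ``centres'' and ``clients'' and, for each client, binary-searches down a random hierarchical partition of the centres for one incident centre: each non-done client succeeds with constant probability, so a constant fraction of super-nodes merge, but at $O(\log k)$ queries per client this costs $\Theta(k\log k)$ — essentially the known deterministic $O(n\log n)$ bound, and within a $\poly(\log\log n)$ factor of \citet{ChakrabartyL23}. The crux, and the step I expect to be the main obstacle, is to drive the amortized cost per merge down to $O(1)$. The idea is to avoid paying $\log k$ for most merges: run cheap linear-query sweeps — pairing up super-nodes and keeping the adjacent pairs; attaching clients to centres after only $O(1)$ refinement levels — which already merge a constant fraction whenever the contracted graph is ``dense'', and reserve full binary searches for the residual super-nodes, charging their cost to a size-based potential (a balanced sequence of merges satisfies $\sum_{\text{merges}}\log(\text{sizes involved})=O(n)$), while reusing hierarchical cut information across clients and across rounds so that the genuinely expensive ``locate the far endpoint of the merging edge'' operations are rare. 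Making this bookkeeping rigorous — simultaneously controlling the additive $O(k)$ per-round overhead, the number of expensive operations, and the potential drop so that everything telescopes to $O(n)$, all while preserving zero error via the $\CUT(C)=0$ certificate — is where the real work lies and is precisely what goes beyond the $O(n)$-for-unweighted algorithm of \citet{Apers} and the $O(n\cdot\poly(\log\log n))$-for-weighted algorithm of \citet{ChakrabartyL23}.
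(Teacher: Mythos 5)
Your reduction is the standard one and is fine as far as it goes: certify finished components by the exact test $\CUT(C)=0$ (zero error), and reduce \Cref{thm:1} to a single phase subroutine that, given $k$ live super-nodes, finds outgoing edges for a constant fraction of them using $O(k)$ queries in expectation. But that subroutine is the entire content of the theorem, and your proposal does not actually contain a mechanism for it --- you say yourself that driving the amortized cost per merge from $O(\log k)$ down to $O(1)$ ``is where the real work lies.'' The paper's proof lives precisely there, and its key idea is absent from your sketch: after a random red/blue bipartition of the components, it runs $\lceil\log t\rceil$ iterations in which blue vertices are sampled at rate $2^i/t$ and the bipartite subgraph $E(\mathcal{R}_i,B_i)$ is \emph{reconstructed in full} with an optimal-query reconstruction algorithm ($O(m\log n/\log m)$ $\CROSS$ queries, \Cref{lemma eobm} or \Cref{lemma eochoi}), deleting every red representative as soon as one incident edge is found. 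The punchline (\Cref{lemma alg1property}) is that an edge $(r,b)$ is ever reconstructed with probability at most $5/d(r)$, because a red vertex of degree $d(r)$ almost surely disappears by iteration $\approx\log_2(n/d(r))$; hence the total number of reconstructed edges is $O(t)$ in expectation even though no degree is ever known or estimated. Combined with the analytic bound of \Cref{claim 3} and a budget-with-abort wrapper (making each phase Monte Carlo with constant success probability, then repeating until success to get zero error), this yields the $O(t)$-query phase. This ``simulate sampling inversely proportional to degree without knowing degrees'' step is exactly what distinguishes the weighted case from \citet{Apers}, and it is the idea your proposal is missing.

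The substitutes you gesture at do not close the gap. A size-based potential with $\sum_{\text{merges}}\log(\text{sizes})=O(n)$ cannot pay for the binary searches: locating the far endpoint of a merging edge costs $\Theta(\log k)$ queries regardless of the sizes of the two components involved, and a graph such as a long path forces $\Omega(n)$ merges of small components, giving $\Omega(n\log n)$ under your accounting. The ``cheap linear-query sweeps'' (pairing up super-nodes, or stopping a hierarchical refinement after $O(1)$ levels) only merge a constant fraction when the contracted graph is dense, and in weighted graphs you cannot cheaply detect which regime you are in or how many neighbours a client has --- degree estimation is provably not an $O(1)$-query operation here (\citet{chakraborty2022support}), which is exactly the obstruction the paper's sampling-plus-reconstruction scheme is designed to bypass. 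So the proposal correctly frames the problem but does not prove the statement; the missing ingredient is the phase algorithm of \Cref{alg SRCC}/\Cref{alg RCC} together with the probabilistic charging argument of \Cref{lemma alg1property}.
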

The query complexity of our algorithm is optimal up to a constant factor; \citet{AuzaL21} prove that any zero-error randomized algorithm to detect whether a graph is connected or not, even when the query model is a much stronger\footnote{They consider the $\LINEAR$ model where one specifies a $\binom{n}{2}$-dimensional query vector $q$ and obtains the answer $\sum_e q(e)w(e)$.} query model than the $\CUT$ query model, 
needs to make at least $\Omega(n)$ many such queries. To the best of our knowledge, it is the only regime of this problem where the upper and lower bounds are tight up to constants, and completes the story for learning maximal spanning forests in weighted undirected graphs using zero-error randomized algorithms.


Our second result is a {\em deterministic} algorithm for undirected, unweighted graphs which makes $O(\frac{n\log n}{\log\log n})$ queries.

\begin{theorem}\label{thm:2}
	Given $\CUT$ query access to an unknown unweighted undirected graph $G=(V,E)$, there is a deterministic algorithm that makes $O(\frac{n\log n}{\log\log n})$ queries and returns a maximal spanning forest in $G$.
\end{theorem}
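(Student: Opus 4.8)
The plan is to start from the folklore $O(n\log n)$-query deterministic algorithm and shave a $\log\log n$ factor by exploiting that, on an \emph{unweighted} graph, a $\CUT$ query returns an integer in $\{0,1,\dots,\binom n2\}$, i.e.\ $\Theta(\log n)$ bits, rather than a single bit. Recall the baseline: maintain the partition of $V$ into the vertex sets of the components of a growing forest $F$; using the identity $e(S,T)=\tfrac12\bigl(\CUT(S)+\CUT(T)-\CUT(S\cup T)\bigr)$ for disjoint $S,T$, one query reveals the exact number of edges between a component $C$ and a candidate block of vertices, so a binary search over $V\setminus C$ (and a second one inside $C$ to pin down both endpoints) finds a new edge of $F$ in $O(\log n)$ queries. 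Running this in Bor\r{u}vka phases costs $\sum_i k_i\cdot O(\log n)$, where $k_i\le n/2^{\,i-1}$ is the number of non-isolated components entering phase $i$; since $\sum_i k_i=O(n)$, the total is $O(n\log n)$, and the bottleneck is the very first phase.

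To do better I would keep the Bor\r{u}vka scaffold but handle the $k_i$ per-component searches of a phase \emph{simultaneously}, in blocks of $g=\Theta(\log n)$ components. Each level of the $g$ binary searches in a block poses $g$ yes/no questions of the form ``does $C_j$ have a neighbour in the left half of its current candidate set?''; the aim is to resolve all $g$ of them using only $O(g/\log g)=O(\log n/\log\log n)$ $\CUT$ queries, by choosing the query sets so that their $O(g/\log g)$ integer answers jointly determine the $g$ decisions, in the spirit of Grebinski--Kucherov / Bshouty coin weighing --- this is possible only because the answer range $[\,0,\binom n2\,]$ comfortably accommodates $\poly(\log n)$-bounded combinations of the quantities $e(C_j,\cdot)$. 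A block of $g$ components then produces $g$ edges of $F$ using $O(\log n)\cdot O(\log n/\log\log n)$ queries, i.e.\ $O(\log n/\log\log n)$ per edge; summing over the at most $n$ edges of $F$ gives the claimed bound, with an initial $O(\log g)=O(\log\log n)$ ``unbatched'' levels per component (needed until the candidate sets have shrunk enough to be made pairwise disjoint) contributing only a lower-order $O(n\log\log n)$. Maximality of $F$ is inherited from Bor\r{u}vka: every added edge is a genuine edge of $G$, and a component is retired only once a single $\CUT$ query certifies it has no outgoing edge.

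The step I expect to be the main obstacle is making this ``batched level'' genuinely work, since the only freedom in a $\CUT$ query is the choice of vertex subset while every edge carries unit weight: one must design the query sets so the returned integers determine the $g$ threshold bits $[\,e(C_j,\cdot)\ge 1\,]$ \emph{and} are not corrupted by edges running between one search's component $C_j$ and another search's candidate region $R_{j'}$ with $j\neq j'$. I would handle the corruption by keeping the $g$ components and their $g$ candidate regions of a block pairwise disjoint and subtracting off the stray cross-terms (themselves read off by the same mechanism), and handle the ``the relevant counts can be large rather than $0/1$'' difficulty by grouping and rescaling so that the coin-weighing recovery still uses only $O(g/\log g)$ queries. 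The remaining ingredients --- recomputing the $O(k_i)$ baseline $\CUT$ values each phase, maintaining and splitting candidate regions, reconciling via union-find which components actually merge when a search resolves into another block's component, and detecting isolated vertices --- are routine and do not affect the asymptotics.
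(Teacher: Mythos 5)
Your plan hinges on one step that does not hold up: resolving, at each level of a block of $g=\Theta(\log n)$ parallel binary searches, all $g$ threshold bits $[\,e(C_j,L_j)\ge 1\,]$ with only $O(g/\log g)$ $\CUT$ queries. Coin-weighing constructions (Lindstr\"om, Grebinski--Kucherov, Bshouty) give this kind of compression only when the unknowns are \emph{Boolean} (or at least boundedly small): then $0/1$ design matrices let a single integer answer encode many coordinates. Here the unknowns $e(C_j,L_j)$ can be as large as $\Theta(n^2)$, and a $\CUT$ query in an unweighted graph can only add such counts with coefficients in $\{0,1\}$ --- there is no ``rescaling'' available, since you cannot attach distinct weights to distinct component--region pairs. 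Already with two unknowns $x_1,x_2\in\{0,\dots,M\}$, the single answer $x_1+x_2=5$ is consistent with $(5,0)$ and $(1,4)$, so aggregated answers cannot determine individual threshold bits; nothing in your sketch breaks this degeneracy. The two patches you propose do not close the gap: (i) making the $g$ candidate regions pairwise disjoint is not something a few unbatched levels can enforce --- after $O(\log\log n)$ levels each region still has size $n/\poly(\log n)$ and the regions are dictated by the graph and the search history, not freely choosable, and forcing disjointness by fiat can strip a component of all its neighbors, breaking the Bor\r{u}vka invariant; (ii) ``subtracting off the stray cross-terms, themselves read off by the same mechanism'' is circular, and learning the $\Theta(g^2)$ exact cross-counts $e(C_j,R_{j'})$ would alone exceed the $O(g/\log g)$ per-level budget. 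So the central batching lemma is unproven and, as stated, false.

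For contrast, the paper's proof never batches searches across components; it applies coin weighing only where the unknown vector genuinely is Boolean, namely the indicator of a single vertex's residual neighborhood: any vertex $v$ with $\resdeg(v)\ge L=\log n/(\log\log n)^2$ has its entire residual edge set $\resedge(v)$ learned by $\Bshouty$ at cost $O(\resdeg(v)\log n/\log\resdeg(v))$, i.e.\ $O(\log n/\log\log n)$ per edge, during a BFS-style component-growing stage; low-degree vertices are simply not expanded, which caps the number of edges crossing the resulting components by $nL$. The second stage then reconstructs the whole contracted multigraph deterministically with $\BMdet$ in $O(\frac{m\log n}{\log m}+m\log\log m)=O(\frac{n\log n}{\log\log n})$ queries, and converts pseudo-edges of its spanning forest into true edges of $G$ by bipartite reconstruction at high-degree supernodes and plain binary search at low-degree ones. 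If you want to salvage your batching idea, you would need a mechanism that first certifies the relevant counts are $O(\poly\log n)$-bounded (or reduces to a Boolean unknown vector, as the paper does) before any information-theoretic packing argument can apply.
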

\noindent
This (slightly) improves upon the $O(n\log n)$-query algorithm mentioned in~\citet{Harvey08}. It is known that Harvey's algorithm can also work with the much weaker ``OR'' query model where one only gets to know if the cut value is zero or positive, and in this model, via a connection to communication complexity of graph connectivity (\cite{HajnalMT88}), an $\Omega(n\log n)$ lower bound is known for deterministic algorithms. Our algorithm shows that with the stronger $\CUT$ queries one cannot obtain the same lower bound, and thus opens up possibilities of much better deterministic algorithms. 

\paragraph{Perspective.} The problem of finding a maximal spanning forest and deciding the connectivity of a graph is a classic algorithmic question which has been studied in many models of computation including dynamic (\cite{KapronKM13,DuanZ17}), streaming (\cite{AhnGM12,NelsonY18}), and parallel (\cite{AndoniSSWZ18,BehneDELM19}) computing. Our first result, 
\Cref{thm:1}, gives a tight understanding of the problem in terms of query complexity, closing the gap between $O(n\log\log n)$ and $O(n)$ left by previous works. 
Quantitatively, this may not seem like a big improvement, but it has qualitative value in (a) being an end-of-the-line study for this problem, (b) ruling out cut functions on weighted undirected graphs as candidate lower bounds for symmetric SFM, and (c) as we explain in the next subsection, leading to a new algorithmic technique which may be helpful in other problems. 
Furthermore, our second result, \Cref{thm:2}, shows that for deterministic algorithms, the simple $O(n\log n)$-query is not optimal, and therefore can lead to more interest in coming up with deterministic strategies 
for solving this problem. 

\subsection{Technical Contribution}\label{sec:tech}

In this section, we give a technical overview of both our results. The maximal spanning forest algorithms in our and all previous papers use the following ``Bor\r{u}vka style'' framework: begin with a collection of $n$ singleton connected components and in phases merge connected components till one gets a maximal spanning forest. What differs is what each phase does.

\paragraph{Randomized Algorithm (\Cref{thm:1}) approach.} A phase in our algorithm is a randomized algorithm that takes input a {\em weighted} graph with $t$ connected components consisting of learnt edges, and then performs queries to discover new edges so that the number of connected components go down to $ct$ for some constant $c < 1$. The whole creativity lies in how to do this using only $O(t)$ queries in expectation. Once we have this, a simple geometric sum gives the $O(n)$ query algorithm in expectation: the algorithm proceeds in $O(\log n)$-rounds making $O(n + cn + c^2n + \ldots) = O(n)$ many queries in all. We note here that~\cite{Apers} obtains the same result on unweighted graphs, but our algorithm is {\em different} and not just a generalization of their algorithm.

To illustrate and underscore our main technical contribution, in~\Cref{subsec:1}, we first focus on a simpler version of the problem which is, in some sense, the first phase. We assume our unknown, {\em weighted}, undirected graph $G = (R\cup B, E)$ is {\em bipartite}\footnote{We obtain this bipartite graph via a random bipartition, which is described subsequently after the simple version.} with vertices being red, $R$, or blue, $B$, and $|R| = |B| = n$, and every vertex $r\in R$ has at least one neighbor in $B$ and we begin with no knowledge of the edges. The goal is to design a randomized algorithm (\Cref{alg SRCC}) which makes $O(n)$ queries in expectation, and learns at least one edge incident to every red vertex, thereby leading to a new graph with $\leq n$ connected components (down from the $2n$ singletons we began with). This forms the heart of the final maximal spanning forest algorithm, which is described in~\Cref{subsec:2}.

Before we describe the idea behind our algorithm, it is worthwhile describing the idea in~\citet{Apers} in unweighted graphs (who also solve the above problem) and why it fails to generalize with weights. 	
Their algorithm first queries the number of blue neighbors for every red vertex using $O(n)$ $\CUT$ queries; this {\em crucially} uses that the graph is unweighted. After this, 
they partition the vertices of $R$ into $\lceil \log_2 n \rceil$ classes where class $i$ contains vertices with {\em degree} $\approx 2^i$, and for
each of these classes, the algorithm samples a subset of vertices from $B$ with probability $\approx 1/2^i$. It's not hard to show that $\Theta(1)$ of the vertices in $R$ has 
	exactly one neighbor in the corresponding sampled subset in $B$; in other words, with high probability there is a matching of size $cn$ for some $c<1$. 
 At this point, \citet{Apers} uses an algorithm by~\citet{GrebinskiK00} which gives an $O(n)$ query algorithm to recover all the edges of this matching.

\paragraph{\em The Degree Issue with Weights and Our Bypass.} As noted in~\cite{ChakrabartyL23}, the main issue in implementing the above method in weighted graphs is in the first step of figuring out the number of blue neighbors (let's call this the blue-degree) of a red node. While this is a near triviality in unweighted graphs, with weighted graphs there are some provable hardness (see~\cite{ChakrabartyL23}). Indeed, the lion's share of the Monte-Carlo $O(n\log\log n)$ query algorithm in~\cite{ChakrabartyL23} is spent in {\em estimating} the blue-degrees of every red node. It is left as an open question whether this can be done in $O(n)$ time, and if so, their algorithm could perhaps be modified to give an $O(n)$ time algorithm. We do not resolve this ``degree-estimation'' question; indeed, very recently,~\cite{chakraborty2022support} (cf. Theorem 4.1) prove a super-constant hardness on the problem of estimating the degree and thus this route possibly cannot give a $O(n)$-algorithm.
Rather,
{\em our main insight is that the above idea of ``sampling inversely proportional to degree'' can be morally simulated even without knowing the degrees.}
This idea could potentially be useful in other applications.

\paragraph{\em Our Algorithm in a Nutshell.} 
We proceed in $\lceil \log_2 n \rceil$ iterations. 
We maintain a subset $\calR \subseteq R$ of red vertices for which we haven't found a blue neighbor, and initially $\calR = R$.
In the first iteration, we sample a subset $\calB \subseteq B$ where every vertex is present with probability $1/n$. Then, we use known graph reconstruction algorithms (\Cref{lemma eobm} or~\Cref{lemma eochoi}) to learn the edges in the subgraph $E(\calR,\calB)$. Since $|\calB|$ is small, this is a sparse subgraph, and the number of queries needed is small. Next, we {\em remove} every vertex in $\calR$ for which we have found an edge and proceed to the next iteration.
Now we sample $\calB \subseteq B$ with probability $2/n$, and repeat the same procedure, always removing vertices from $\calR$. In the $i$th iteration, the sampling probability is $2^i/n$, and therefore in the $\log n$th iteration, we reconstruct the graph $E(\calR,B)$ where recall $\calR$ is the subset of vertices of the original $R$ for whom we haven't discovered an edge. And so, by the end of these iterations, we would've learnt at least one neighbor for every vertex in $R$ completing what we set out to do.

Why is the query complexity of the above algorithm small? Indeed, the worry is that when $\calB$ is as big as $\Omega(n)$, the graph between $\calR$ and $\calB$ may no longer be sparse.
We prove (see~\Cref{corollary alg1property}) that this cannot be the case by noting that by the time $\calB$ is ``large'', all the high-degree vertices in the original $\calR$ would already have been removed. 
In particular, if the degree of a red vertex $r$ is $d(r)$, which remember is something that we don't know, then this would have been removed by the $\approx \log_2 (n/d(r))$th iteration.
For instance, if $|\calB| = \Theta(n)$, then $\Theta(\log n)$ rounds of the process must have passed, and it's highly likely that the only vertices remaining in $\calR$ would have degree $\Theta(1)$. And so, in this iteration the graph is sparse as well. This explains the key new idea behind our randomized algorithm and is formally proven in~\Cref{lemma alg1property}. 

\paragraph{Deterministic Algorithm (\Cref{thm:2}) Idea.}
Our deterministic algorithm is actually pretty simple and stems from the observation that with $\CUT$ queries one can learn neighborhoods of ``high'' degree vertices paying $\ll \log n$ queries per vertex.
The algorithm keeps doing this and growing connected components BFS style till the number if edges across the components become much smaller than $O(n\log n)$. Since the graph is unweighted, this estimate can be maintained. Once the graph becomes this sparse, once again exploiting the power of $\CUT$ queries, the whole graph can be reconstructed using known results. The latter step requires some non-trivial work since naively we only obtain the information which pairs of components have an edge between them, which is enough for answering the question whether a graph is connected or not; finding the true edges requires the full power of the $\CUT$-queries.
Balancing these two ideas gives an $O\left(\frac{n\log n}{\log\log n}\right)$ query algorithm. We don't believe this is the correct answer, but it is perhaps a first step in obtaining $\ll n\log n$ query algorithms a la~\citet{Harvey08}, which, recall, would also work with even a weaker ``OR'' query model and for which the query complexity is optimal. 

\subsection{Related Works}\label{sec:rel}

The question of reconstructing the whole graph using $\CUT$ queries has almost been resolved after a long series of works~(cf. \citet{GrebinskiK00,AlonBKRS02,AlonA05,ReyzinS07,ChoiK10,Mazzawi10,BshoutyM11,BshoutyM12,Choi13}).
For unweighted undirected graphs with $m$ edges, \citet{ChoiK10} proved the existence of non-adaptive deterministic algorithms making $O(\frac{m\log\frac{m}{n}}{\log m})$ queries, and~\cite{Mazzawi10}
described an efficient adaptive deterministic algorithm with similar query complexity. This query complexity is information theoretically optimal.
For (non-negative) weighted graphs,~\citet{Bshouty2011parity} prove the existence of non-adaptive $O(\frac{m\log n}{\log m})$-query algorithms. Then in~\citet{BshoutyM11} the authors describe a deterministic $O(\frac{m\log n}{\log m} + m\log\log m)$-query adaptive algorithm. If we allow randomization, then~\citet{Choi13} gives an efficient, randomized $O(\frac{m\log n}{\log m})$ query algorithm which is information theoretically tight as well. For our efficient randomized algorithm, we use this result. It is an open question to design efficient non-adaptive algorithms to reconstruct using optimal query complexity.

The question of studying just whether an undirected graph is connected using $\CUT$ queries was perhaps first explicitly noted in~\citet{Harvey08} due to connections to submodular function minimization.
The same question for minimum cuts was initiated by the paper~\citet{RubinsteinSW18} who described
an $O(n \mathrm{polylog}~n)$ algorithm to find the minimum cut. In fact, the paper of~\citet{Apers} mentioned above, gives a randomized Monte Carlo $O(n)$-query algorithm to solve this problem.
In unweighted graphs, it is not known whether this query complexity is tight, and the only lower bound is an $\Omega(n\log\log n/\log n)$-lower bound that follows from the communication complexity of connectivity result of~\cite{RS95}. 
For weighted graphs, the best known algorithm to solve the minimum-cut is a randomized Monte Carlo $O(n\mathrm{polylog}~n)$-query algorithm by~\citet{MN20}. There is no $O(n)$ algorithm known for finding the minimum cut in the weighted case.

Graph reconstruction has also been studied under other query models. In particular, there is the weak ``OR'' query model (also known as independent set (IS) queries), where one asks for a subset $S$ and obtains only the information whether $S$ is an independent set or not. A slightly different model called the bipartite independent set (BIS) query passes two disjoint subsets $A, B$ and obtains whether there exists an edge with one endpoint in $A$ and one endpoint in $B$. 
This model is more related to ``group testing'' question~\citet{Dorfman43} where one only gets a weak signal from a query. \citet{AngluinC08} described an $O(m\log n)$-query algorithm to learn the whole graph using IS queries, and this is information theoretically tight. The question of reconstructing the spanning tree from such weaker models have also been studied. It is not hard to see an $\Omega(n^2)$ lower bound\footnote{Consider a graph on $2n$ vertices where we have two cliques on $n$ vertices connected with a single edge. Any IS query containing more than one vertex from any of the parts gives no information. So, the problem reduces to finding 
a $1$ in an $n^2$-dimensional vector where we can only query singletons.} for IS queries. Furthermore, the algorithm in~\citet{Harvey08} works with BIS queries, and indeed, as mentioned above, 
in this weaker model this factor is tight. 
In~\citet{assadi}, the authors study the rounds-vs-query trade-off for the problem of learning a spanning tree for BIS queries, and give a near sharp resolution. 
Finally, other properties such as estimating the number of edges (\citet{BeameHRRS18,ChenLW20,AddankiMM22}, etc) have been studied under these models, and more generally this area of understanding the query complexity of learning/estimating graph parameters under a variety of models is a rich and relevant area of study.

\section{Preliminaries and Subroutines} 
We begin by defining two other notions of queries which have been used in the literature and is often more convenient to use.
We begin with the notion of $\CROSS$ queries. Given a graph $G = (V, E, w)$ and two disjoint subsets $A, B \subseteq V$, $\CROSS(A, B)$ returns $\sum_{e\in E(A, B)}w(e)$, where $E(A, B)$ is the set of vertex pairs $(a, b) \in A \times B$ such that $(a, b) \in E$. It's easy to see the following.

\begin{proposition}\label{prop:1}
	A $\CROSS$ query can be simulated by 3 $\CUT$ queries.
\end{proposition}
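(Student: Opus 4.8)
The plan is to express $\CROSS(A,B)$ as a fixed linear combination of three $\CUT$ values via inclusion--exclusion, so that issuing those three $\CUT$ queries and combining their answers recovers $\CROSS(A,B)$ exactly.

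First I would fix disjoint $A, B \subseteq V$ and set $C := V \setminus (A \cup B)$, so that $\{A, B, C\}$ partitions $V$. Every edge $e \in E$ then belongs to exactly one of (at most) six classes according to which parts contain its endpoints: internal to $A$, internal to $B$, internal to $C$, between $A$ and $B$, between $A$ and $C$, or between $B$ and $C$. Let $x_{AB}, x_{AC}, x_{BC}$ denote the total weight of the edges in the last three classes respectively; by definition $\CROSS(A,B) = x_{AB}$, and internal edges never contribute to any cut we will use.

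Next I would read off the three cut values in these terms. Since $\partial A$ consists exactly of the edges with one endpoint in $A$ and the other in $B \cup C$, we get $\CUT(A) = x_{AB} + x_{AC}$; symmetrically $\CUT(B) = x_{AB} + x_{BC}$; and since $\partial (A \cup B)$ consists exactly of the edges with one endpoint in $A \cup B$ and the other in $C$, we get $\CUT(A \cup B) = x_{AC} + x_{BC}$. Adding the first two identities and subtracting the third yields $\CUT(A) + \CUT(B) - \CUT(A \cup B) = 2 x_{AB} = 2\,\CROSS(A,B)$, hence $\CROSS(A,B) = \tfrac12\big(\CUT(A) + \CUT(B) - \CUT(A \cup B)\big)$. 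The simulation therefore issues the three $\CUT$ queries $\CUT(A)$, $\CUT(B)$, $\CUT(A \cup B)$ and returns half of this alternating sum.

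There is essentially no obstacle here; the only point requiring care is that $A$ and $B$ must be disjoint (otherwise the set $A \cup B$ does not separate them and the bookkeeping above fails), which is exactly the hypothesis built into the definition of a $\CROSS$ query.
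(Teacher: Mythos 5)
Your proof is correct and uses exactly the identity the paper uses, namely $\CROSS(A,B) = \tfrac12\bigl(\CUT(A)+\CUT(B)-\CUT(A\cup B)\bigr)$; the paper simply states this formula, while you spell out the edge-class bookkeeping that verifies it. Nothing further is needed.
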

\begin{proof}
	$\CROSS(A,B) = \frac{1}{2}\cdot \left(\CUT(A) + \CUT(B) - \CUT(A\cup B)\right)$.
\end{proof}
\def\ADD{\mathsf{ADDITIVE}}

Next is the notion of an $\ADD$ query. Given a graph $G = (V, E, w)$ and subset $S\subseteq V$, the additive query $\ADD(S)$ returns the sum of weights
$\sum_{e = (a,b)\in E, a\in S, b\in S} w(e)$.

\begin{proposition}\label{prop:2}
	Any algorithm making $t$ $\ADD$ queries can be simulated by an algorithm making $n + t$ $\CUT$ queries.
\end{proposition}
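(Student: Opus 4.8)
The plan is to exploit the elementary double-counting identity linking $\ADD$, $\CUT$, and weighted vertex degrees. Write $d(v) := \sum_{e \ni v} w(e)$ for the weighted degree of $v$, and note that $d(v) = \CUT(\{v\})$. For any $S \subseteq V$, summing $d(v)$ over $v \in S$ counts each edge with both endpoints in $S$ exactly twice and each edge of $\partial S$ exactly once, so
\[
\sum_{v \in S} d(v) = 2\,\ADD(S) + \CUT(S), \qquad\text{equivalently}\qquad \ADD(S) = \tfrac12\Big(\textstyle\sum_{v\in S} d(v) - \CUT(S)\Big).
\]

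First I would spend $n$ $\CUT$ queries, one on each singleton $\{v\}$, to learn all the weighted degrees $d(v)$. After this preprocessing, the simulation runs the given $\ADD$-query algorithm step by step: whenever it issues an $\ADD(S)$ query, the simulator instead issues the single $\CUT(S)$ query and returns $\tfrac12\big(\sum_{v\in S} d(v) - \CUT(S)\big)$, which by the identity above equals $\ADD(S)$. Every $\ADD$ query is thus answered correctly using exactly one $\CUT$ query, so $t$ $\ADD$ queries cost $t$ $\CUT$ queries, for a total of $n + t$.

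I do not expect any genuine obstacle here; the argument is a one-line identity. The only point worth stating carefully is that the reduction remains valid for \emph{adaptive} algorithms, which is immediate: the $n$ singleton queries are fixed in advance, and the per-step translation of an $\ADD(S)$ query into a $\CUT(S)$ query depends only on $S$ (and on the already-known degrees), not on any future queries the algorithm might make. Under the paper's graph model (edges are vertex pairs, no self-loops), the identity needs no further caveats.
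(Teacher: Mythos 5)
Your proof is correct and is essentially the paper's own argument: query the $n$ singleton cuts up front, then answer each additive query $\ADD(S)$ with the single query $\CUT(S)$ via the degree-sum identity $\ADD(S) = \tfrac12\bigl(\sum_{v\in S}\CUT(\{v\}) - \CUT(S)\bigr)$. (Your version even states the identity with the sum correctly taken over $v\in S$, whereas the paper's displayed formula writes the sum over $v\in V$, an apparent typo.)
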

\begin{proof}
	This follows by noting that $\ADD(S) = \frac{1}{2}\cdot \left(\sum_{v\in V} \CUT(v) - \CUT(S) \right)$.
	That is, once $\CUT(\{v\})$ is known for all the $n$ different $v\in V$, any $\ADD$ query can be answered using a single $\CUT$ query.
\end{proof}

We then state the graph reconstruction results with optimal query complexity and the minor modifications we need to make for our purposes. 
The first result is a deterministic, non-adaptive query algorithm by~\cite{Bshouty2011parity} with optimal query complexity.

\begin{lemma} \label{lemma bm}~\citep[][Corollary 4]{Bshouty2011parity}.
There exists a non-adaptive determinstic algorithm $\mathsf{GraphReconstructionBM}$ that uses $O(\frac{m \log n}{\log m})$ $\ADD$ queries and reconstruct any weighted hidden graph with at most $m$ edges.
\end{lemma}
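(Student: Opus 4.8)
Since this is a result from prior work I will only sketch the ideas behind it. The starting point is that an $\ADD(S)$ query is just the linear functional $w \mapsto \langle w, \mathbbm{1}_{E(S)}\rangle$ on the unknown weight vector $w \in \mathbb{R}_{\ge 0}^{\binom{n}{2}}$, where $\mathbbm{1}_{E(S)}$ is the indicator of the vertex pairs lying inside $S$. A non-adaptive deterministic algorithm is therefore nothing but a fixed list of subsets $S_1,\dots,S_q \subseteq V$, and the task reduces to exhibiting such a list of length $q = O(m\log n/\log m)$ for which the resulting $q \times \binom{n}{2}$ $\{0,1\}$-matrix, whose rows are the ``clique indicators'' $\mathbbm{1}_{E(S_i)}$, is injective on all $m$-sparse nonnegative vectors and, moreover, the induced linear system can be solved in polynomial time. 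This is a structured version of the classical coin-weighing problem, the extra structure being that the allowed measurement vectors are clique indicators rather than arbitrary $0/1$ vectors.

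The plan is to build such a list in two stages. First, assume every vertex has degree at most $\Delta$, keeping $\Delta$ as a free parameter. Take a family of colorings $\chi_1,\dots,\chi_L \colon V \to [O(\Delta)]$ that is separating in the sense that for every vertex $v$ and every set of at most $\Delta$ other vertices, some $\chi_\ell$ gives $v$ a color distinct from all of them; a standard probabilistic / perfect-hash-family argument produces such a family with $L = O(\Delta\log n)$, and it can be made explicit. For each $\chi_\ell$ and each pair of color classes, a constant number of $\ADD$ queries implements the corresponding $\CROSS$ query and thus returns the total weight of edges between the two classes; as $\ell$ ranges over the family, this data lets us recover, for every vertex $v$, the partition of its (at most $\Delta$) incident edges into small buckets, and then recover the individual incident weights by composing with Lindström's optimal coin-weighing matrix (a vector in $\{0,1,\dots,k\}^N$ is recoverable from $O(N\log k/\log N)$ subset-sum queries, with the sparse analogue $O(k\log(N/k)/\log k)$ for $k$-sparse vectors). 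A careful \emph{global} accounting over the layers — rather than a wasteful per-vertex one — brings the total down to $O(m\log n/\log m)$ in the bounded-degree case.

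Second, one removes the degree bound: since $\sum_v \deg(v) = 2m$, at most $2m/\Delta$ vertices have degree above $\Delta$, so one runs a separate, coarser coin-weighing scheme on these ``heavy'' vertices — each of whose neighbourhoods is searched inside all of $V$ — while the graph induced on the ``light'' vertices has maximum degree $\le \Delta$ and is handled by the first stage; choosing $\Delta$ to balance the two parts and concatenating the fixed query sets keeps the whole thing non-adaptive, deterministic, and within $O(m\log n/\log m)$. The main obstacles are the ones that always show up in this line of work: (i) making all three ingredients — the separating colorings, the coin-weighing matrices, and the heavy-vertex scheme — simultaneously \emph{explicit} and \emph{polynomial-time decodable} rather than merely existential, and (ii) carrying out the recovery over the reals, since Lindström-type bounds are usually phrased for bounded-integer weights; one handles this either by arguing that the same matrix stays injective on real $m$-sparse vectors (a linear-independence/spark statement) or by an initial rescaling of the weights. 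These are precisely the technical contributions of \citet{Bshouty2011parity}, whose Corollary~4 we cite for the final statement.
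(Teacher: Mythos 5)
The paper never proves this lemma: it is imported verbatim as Corollary~4 of \citet{Bshouty2011parity}, so the citation you end with is the only ``proof'' the paper has, and on that level your proposal is fine. However, your sketch of what lies behind the citation misrepresents the cited result in a way the paper itself is careful about. You claim that making the separating colorings, the coin-weighing matrices, and the heavy-vertex scheme ``simultaneously explicit and polynomial-time decodable'' is ``precisely the technical contribution'' of \citet{Bshouty2011parity}. It is not: as stated right after \Cref{lemma bmcross}, that paper only proves the \emph{existence} of a non-adaptive deterministic query set -- it gives neither an explicit construction nor a polynomial-time recovery procedure, and obtaining either with this query complexity is an open problem; polynomial-time recovery at $O(\tfrac{m\log n}{\log m})$ queries was achieved only later by \citet{Choi13}, and only via an \emph{adaptive, randomized} algorithm (\Cref{lemma choi}). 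The lemma as stated asserts existence only, so this overclaim does not make the statement false, but it matters for how the present paper uses it: $\GRone$ (\Cref{lemma eobm}) inherits the non-constructiveness, which is exactly why the efficient algorithm of \Cref{thm:1} switches to $\GRtwo$ (\Cref{lemma eochoi}).

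There is also a substantive gap inside the constructive route you outline. The scheme ``perfect-hash-style separating colorings plus Lindstr\"om matrices, with a careful global accounting'' is the Grebinski--Kucherov style argument; as far as is known it does not reach the optimal $O(\tfrac{m\log n}{\log m})$ bound for general weighted graphs, and the sentence ``a careful global accounting over the layers brings the total down'' is precisely the step you would have to justify and cannot wave through (a naive count of color-class pairs per coloring already overshoots the target). The actual argument in \citet{Bshouty2011parity} establishes, by a non-constructive counting/probabilistic argument, the existence of a single $(0,1)$ measurement (search) matrix with $O(\tfrac{m\log n}{\log m})$ rows that identifies every weight vector supported on at most $m$ pairs, and only then interprets its rows as additive queries. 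If you want to keep a proof sketch at all, it should describe that existence argument (and note that injectivity over the reals, not just bounded integers, is part of what is proven there), rather than an explicit two-stage construction the literature does not supply.
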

\noindent
Note that the above algorithm uses additive queries. An additive query on $S \subseteq V$ returns the sum of edge weights in the induced subgraph with vertex set $S$, and in general is 
stronger than $\CROSS$ queries\footnote{While $3$ additive queries can simulate a $\CROSS$ query, it was shown in~\cite{LeeMS21} that $\Omega(n)$ $\CROSS$ queries may be needed to simulate an additive query.}.
Observe that if $G = (R, B)$ is a bipartite graph, an additive query on $S$ is equivalent to a $\CROSS$ query $\CROSS(S \cap R, S \cap B)$. We will only be reconstructing bipartite graphs, and therefore, we only need the following corollary which is a restatement of the above lemma.

\begin{corollary} \label{lemma bmcross}
There exists a non-adaptive, deterministic algorithm $\mathsf{GraphReconstructionBM}$ ($\mathsf{GRBM}$ in short) that uses $O(\frac{m \log n}{\log m})$ $\CROSS$ queries and reconstruct any weighted hidden \textbf{bipartite} graph with at most $m$ edges.
\end{corollary}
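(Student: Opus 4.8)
The plan is to simply run the algorithm $\mathsf{GraphReconstructionBM}$ guaranteed by \Cref{lemma bm} and to argue that, on a bipartite instance, each of its $\ADD$ queries can be answered by a single $\CROSS$ query with no loss. The one thing to keep in mind is that the vertex set $V = R \cup B$ and the bipartition $(R,B)$ are known to us in advance; only the edges of $G$ (and their weights) form the hidden object.

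First I would record the key identity. Fix any $S \subseteq V$. Since $G$ is bipartite with sides $R$ and $B$, every edge of $G$ with both endpoints in $S$ has one endpoint in $S \cap R$ and the other in $S \cap B$, so
\[
\ADD(S) \;=\; \sum_{\substack{e=(a,b)\in E \\ a\in S,\ b\in S}} w(e) \;=\; \sum_{e \in E(S\cap R,\, S\cap B)} w(e) \;=\; \CROSS(S\cap R,\, S\cap B),
\]
and since $S\cap R$ and $S\cap B$ are disjoint, the right-hand side is a legitimate $\CROSS$ query. Next I would set up the simulation: the algorithm of \Cref{lemma bm} is non-adaptive, so it fixes its list of query sets $S_1,\dots,S_q$ with $q = O\!\left(\frac{m\log n}{\log m}\right)$ before seeing any answers; for each $S_j$ we issue $\CROSS(S_j\cap R,\, S_j\cap B)$ — computable because the bipartition is known to us — and feed the returned value, which equals $\ADD(S_j)$ by the identity, into the algorithm, finally outputting whatever $\mathsf{GraphReconstructionBM}$ returns.

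Finally I would verify the three claimed properties. Correctness: a bipartite graph with at most $m$ edges is in particular a weighted hidden graph with at most $m$ edges, so \Cref{lemma bm} guarantees exact reconstruction. Query count: the simulation uses exactly $q = O\!\left(\frac{m\log n}{\log m}\right)$ $\CROSS$ queries. Determinism and non-adaptivity: we invoked a deterministic non-adaptive algorithm and the simulation itself is deterministic and uses only query sets determined in advance from the $S_j$'s. I do not anticipate a genuine obstacle; the only subtlety worth stating explicitly is that forming $S_j\cap R$ and $S_j\cap B$ uses knowledge of the bipartition, which is part of the input rather than part of the hidden graph, so the simulation is sound.
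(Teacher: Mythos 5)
Your proposal is correct and follows exactly the paper's route: it observes that for a bipartite graph with a known bipartition, $\ADD(S) = \CROSS(S\cap R, S\cap B)$, so each additive query of the non-adaptive deterministic algorithm from \Cref{lemma bm} is answered by one $\CROSS$ query, preserving the query count, determinism, and non-adaptivity. Nothing is missing.
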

\noindent
\cite{Bshouty2011parity} only proves the existence of a query algorithm but neither gives an explicit construction, nor does it give a polynomial time recovery algorithm. Indeed, it is an outstanding open question to obtain a non-adaptive and/or deterministic algorithm with this query complexity. The polynomial time complexity was resolved in~\cite{Choi13} via an adaptive, randomized Monte Carlo\footnote{The error probability in Theorem 1 of~\cite{Choi13} is only stated as $1-o(1)$, however, in the statement of Theorem 3 which uses Theorem 1, it is mentioned as $O(\log m/m)$.} algorithm.

\begin{lemma} \label{lemma choi}~\citep[Paraphrasing][Theorem 1]{Choi13}.
One can construct a randomized polynomial time adaptive algorithm $\mathsf{GraphReconstructionChoi}$ ($\mathsf{GRC}$ in short) which given an unknown weighted graph $G$ on $n$ vertices with at most $m$ edges, makes $O\left(\frac{m\log n}{\log{m}}\right)$ $\CROSS$ queries, and with probability $1 - O(\frac{\log m}{m})$ returns the edges of $G$ and their weights.
\end{lemma}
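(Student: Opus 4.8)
The plan is to obtain this statement as a near-immediate consequence of Choi's reconstruction algorithm (\citet{Choi13}, Theorem~1) together with exactly the additive-to-$\CROSS$ conversion that was already used to pass from \Cref{lemma bm} to \Cref{lemma bmcross}. The starting point is that Choi's algorithm is stated in the \emph{additive} query model: it reconstructs an arbitrary weighted $n$-vertex graph with at most $m$ edges using $O\!\left(\frac{m\log n}{\log m}\right)$ additive queries, it is adaptive and randomized, and it runs in polynomial time. This query count is information-theoretically optimal, so no improvement is needed; our only tasks are (i) to re-express the query bound in terms of $\CROSS$ queries, and (ii) to pin down the success probability.

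For (i) I would invoke precisely the observation made just before \Cref{lemma bmcross}: when the hidden graph is bipartite with sides $R$ and $B$, an additive query on a set $S$ returns the total edge weight inside $S$, which for a bipartite graph equals $\CROSS(S\cap R,\,S\cap B)$. Hence every additive query issued by Choi's algorithm on a bipartite input is answerable by a single $\CROSS$ query, and the $O\!\left(\frac{m\log n}{\log m}\right)$ bound transfers verbatim; since the only use of this lemma in the paper is on (randomly) bipartitioned graphs, this suffices. (If one insisted on a general, non-bipartite input with $\CROSS$ queries only, one could recursively split $V$ into halves, reconstruct the bipartite ``cross'' part at each recursion node directly, and recurse, so that each edge is the cross-edge of exactly one node; this costs at most a further logarithmic factor, but I would not belabor it.) For (ii), the one genuine subtlety is the one flagged in the footnote: Theorem~1 of \citet{Choi13} only advertises a $1-o(1)$ success probability, whereas the sharper $1-O\!\left(\frac{\log m}{m}\right)$ bound is what is actually proved there and is the form used in its Theorem~3, so I would simply quote that refined guarantee; and since the algorithm is Monte Carlo, on failure it may output an incorrect edge set, which is exactly why the downstream algorithms in this paper must be built to detect and recover from such errors.

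The ``main obstacle'' here is not a mathematical difficulty but faithful bookkeeping: making sure that the query model ($\CROSS$ versus additive versus $\CUT$), the polynomial running time, and the $1-O(\log m/m)$ error bound are all carried over correctly from the cited source, and that the additive-to-$\CROSS$ identification is invoked only in the bipartite regime where the two query types genuinely coincide.
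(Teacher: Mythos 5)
Your proposal is correct and matches the paper's treatment: the paper gives no proof of \Cref{lemma choi} beyond citing Theorem~1 of \citet{Choi13} (it is adaptive, randomized, polynomial time, with the sharper $1-O(\frac{\log m}{m})$ error bound imported via the footnote pointing to Theorem~3 there), and your identification of additive queries with $\CROSS$ queries on bipartite inputs is exactly the device the paper itself uses to pass from \Cref{lemma bm} to \Cref{lemma bmcross}, with the lemma only ever invoked on bipartite graphs downstream. Your parenthetical recursive-halving fallback for general graphs is unnecessary for the paper's purposes (and would cost an extra logarithmic factor), so restricting to the bipartite regime, as you do, is the right call.
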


As stated, both results above assume knowledge of this parameter $m$ which is an upper bound on the number of edges. 
For our purposes, the number of edges would be unknown (in fact, random variables), and so we use a simple modification via a doubling-trick 
to get an efficient algorithm that recovers the hidden edges of a graph using $\CROSS$ queries that doesn't require to know the number of the hidden edges to be known.
We state these as theorems below.

\begin{subtheorem}{theorem}
\begin{theorem}
~\citep[Follows from][Corollary 4]{Bshouty2011parity}. \label{lemma eobm}
There exists an adaptive, deterministic algorithm $\GRone$ which takes input a bipartite graph $G = (U, V)$ on $n$ vertices and $m$ edges such that $U$ and $V$ are known but $m$ is unknown to the 
algorithm, and reconstructs the edges of $G$ along with their weights, making $\frac{C_{\GRone}m\log n}{\log m}$ $\CROSS$ queries, for some absolute constant $C_{\GRone} > 0$.
\end{theorem}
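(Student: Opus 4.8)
\emph{Proof plan.} The approach is a guess-and-double on the unknown edge count $m$, layered on top of the fixed-$m$ subroutine $\mathsf{GRBM}$ of \Cref{lemma bmcross}; the only real work is arranging a cheap way to \emph{certify} that a guess is large enough. Concretely, I would run rounds $i = 1, 2, 3, \dots$, with the $i$-th round using the guess $\hat m_i := 2^i$: first invoke $\mathsf{GRBM}$ with edge bound $\hat m_i$, spending $O\!\big(\tfrac{\hat m_i \log n}{\log \hat m_i}\big)$ $\CROSS$ queries and obtaining a candidate bipartite graph $\hat G_i$; then try to verify $\hat G_i = G$ using only $O(\hat m_i)$ additional $\CROSS$ queries; if the verification succeeds, output $\hat G_i$, and otherwise move on to round $i+1$. (I would also cap the rounds once $\hat m_i$ exceeds $|U|\cdot|V|$, since then the guess is trivially valid.)

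The key point, and the step I expect to be the main obstacle, is the verification: a priori one cannot tell whether $\mathsf{GRBM}$'s output is correct when the guess $\hat m_i$ is smaller than $m$. The bypass I would use is that a candidate with at most $\hat m_i$ edges touches at most $\hat m_i$ vertices on each side; call these $U_i \subseteq U$ and $V_i \subseteq V$ (and treat $\hat G_i$ as a failed round if it somehow reports more than $\hat m_i$ edges). I would then check: (a) $\CROSS(U \setminus U_i, V) = 0$ and $\CROSS(U, V \setminus V_i) = 0$, which forces every edge of $G$ to live inside $U_i \times V_i$; (b) for each $u \in U_i$, $\CROSS(\{u\}, V \setminus N_{\hat G_i}(u)) = 0$, where $N_{\hat G_i}(u)$ is the candidate neighborhood of $u$, which forces $u$'s true neighborhood to lie inside its candidate neighborhood; and (c) for each candidate edge $(u,v)$, that $\CROSS(\{u\},\{v\})$ equals the weight $\hat G_i$ assigns to it, which, using strict positivity of weights (so a zero answer exposes a spurious edge), pins down every edge and every weight exactly. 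These are $O(\hat m_i)$ queries in total, and passing all of them is equivalent to $\hat G_i = G$, so the verification is both sound and complete.

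Given this, correctness and the query bound should be routine. For termination, if $i^\star$ is the least index with $2^{i^\star} \ge m$, then by \Cref{lemma bmcross} round $i^\star$ already returns $\hat G_{i^\star} = G$ exactly, which passes verification, so the algorithm halts by round $i^\star$; and since it halts only when verification passes, soundness guarantees the output is always exactly $G$. For the query count, round $i$ costs $O\!\big(\tfrac{2^i \log n}{i}\big)$ queries, and since these terms grow geometrically in $i$, the sum over $i \le i^\star$ is dominated up to a constant by its last term, i.e.\ $O\!\big(\tfrac{2^{i^\star}\log n}{i^\star}\big) = O\!\big(\tfrac{m \log n}{\log m}\big)$, and the hidden constant can be taken to be $C_{\GRone}$. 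The only loose ends to tidy are the degenerate small-$m$ cases (e.g.\ $m \le 2$, where $\log m$ misbehaves), which can be handled directly, and confirming that $\mathsf{GRBM}$ can be made to flag a round in which its output exceeds the claimed edge budget.
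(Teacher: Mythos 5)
Your proposal is correct and is essentially the paper's own argument: a doubling sequence of guesses $\hat m_i = 2^i$, running the fixed-$m$ reconstruction of \Cref{lemma bmcross} per guess, certifying the output with $O(\hat m_i)$ extra $\CROSS$ queries using positivity of the weights, and bounding the resulting geometric sum by the last term. The only (immaterial) difference is the certificate: the paper verifies each reported edge individually and then compares the sum of reported weights against a single $\CROSS(U,V)$ query for the total weight, whereas you verify via zero-valued cross queries on complements of the touched vertex sets and candidate neighborhoods; both are sound and complete for strictly positive weights and cost $O(\hat m_i)$ per round.
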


\begin{theorem}~\citep[Follows from][Theorem 1]{Choi13}. \label{lemma eochoi}
	There exists an adaptive, randomized algorithm $\GRtwo$ which takes input a bipartite graph $G = (U, V)$ on $n$ vertices and $m$ edges such that $U$ and $V$ are known but $m$ is unknown to the 
	algorithm, and either reconstructs the edges of $G$ along with their weights, or aborts. The probability that the algorithm aborts or makes more than $\frac{C_{\GRtwo}m\log n}{\log m}$ $\CROSS$ queries, for some absolute constant $C_{\GRtwo} > 0$, is at most $O(\frac{\log m}{m})$.
\end{theorem}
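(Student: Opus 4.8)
The plan is to wrap the randomized reconstruction routine $\GRC$ of \Cref{lemma choi} in a doubling search over the unknown edge count $m$, guarded by a cheap randomized consistency test. Fix the absolute constant $C_0$ hidden in the $O(\cdot)$ of \Cref{lemma choi}, so that $\GRC$ with parameter $m'$, run on a bipartite graph promised to have at most $m'$ edges, makes at most $C_0\tfrac{m'\log n}{\log m'}$ $\CROSS$ queries and returns the true graph with probability at least $1-O(\tfrac{\log m'}{m'})$. Since $m\le |U|\cdot|V|<n^2$, the algorithm $\GRtwo$ tries the guesses $m_i=2^i$ for $i=1,\dots,I_{\max}:=\lceil 2\log_2 n\rceil$ in increasing order. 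For each $i$ it runs $\GRC$ with parameter $m_i$, but aborts that invocation and moves to $i+1$ the moment it exceeds $C_0\tfrac{m_i\log n}{\log m_i}$ queries (or returns a graph with more than $m_i$ edges, or exceeds a fixed polynomial time budget); the cap is needed precisely because once $m_i<m$ the hypothesis of \Cref{lemma choi} fails and its query guarantee no longer applies.

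When the invocation at scale $i$ returns a candidate weight function $\hat w_i$, equivalently a candidate edge set $\hat E_i$ with $|\hat E_i|\le m_i$, we run $t:=\lceil c\log_2 n\rceil$ independent consistency checks, for an absolute constant $c$ fixed below: draw $A\subseteq U$ and $B\subseteq V$ by including each vertex independently with probability $\tfrac12$, query $\CROSS(A,B)$, and compare it with the prediction $\sum_{u\in A,\,v\in B}\hat w_i(u,v)$. We output $\hat E_i$ and halt iff all $t$ checks agree; if $i$ reaches $I_{\max}$ with no such scale, $\GRtwo$ aborts. The test works because if $\hat w_i$ is incorrect then $g:=\hat w_i-w\not\equiv 0$, and $\sum_{u,v}g(u,v)\,\mathbbm{1}[u\in A]\,\mathbbm{1}[v\in B]$ is a nonzero bilinear form in the vertex indicators; fixing $(u_0,v_0)$ with $g(u_0,v_0)\ne 0$ and conditioning on all indicators other than those of $u_0$ and $v_0$, this form equals $\alpha xy+\beta x+\gamma y+\delta$ for $(x,y)\in\{0,1\}^2$ with $\alpha=g(u_0,v_0)\ne 0$, and vanishing at all four points would force $\alpha=\beta=\gamma=\delta=0$. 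Hence each check fails with probability at least $\tfrac14$, and a wrong candidate survives all $t$ checks with probability at most $(3/4)^t\le n^{-3}$ once $c$ is large enough.

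For the analysis let $i_0:=\lceil\log_2 m\rceil$, so $m\le 2^{i_0}<2m$. At scale $i_0$ the hypothesis of \Cref{lemma choi} holds, so with probability $1-O(\tfrac{\log m}{m})$ the invocation stays within its cap and returns $\hat E_{i_0}=E$, which passes every consistency check; hence, unless we already halted at a smaller scale, $\GRtwo$ outputs the correct graph. Halting at a scale $i<i_0$ forces $\hat E_i\ne E$ (as $|\hat E_i|\le m_i<m$), so it is a false accept, as is the rare event that a failed invocation at a scale $\ge i_0$ returns a wrong graph that slips through the checks; in either case the probability, union-bounded over the at most $I_{\max}$ scales, is at most $I_{\max}n^{-3}=O(1/n^2)=O(\tfrac{\log m}{m})$ since $m\le n^2$. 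On the complementary good event $\GRtwo$ halts by scale $i_0$, having made at most $\sum_{i=1}^{i_0}\!\big(C_0\tfrac{2^i\log n}{i}+t\big)=O\!\big(\tfrac{2^{i_0}\log n}{i_0}\big)+O(i_0\log n)=O\!\big(\tfrac{m\log n}{\log m}\big)$ $\CROSS$ queries, where we used the geometric decay of $2^i/i$ and the bound $\log^2 m=O(m)$; this fixes $C_{\GRtwo}$. Finally, the event that $\GRtwo$ aborts or makes more than $C_{\GRtwo}\tfrac{m\log n}{\log m}$ queries entails that $\GRC$ failed at scale $i_0$ (otherwise we halt there within budget), which has probability $O(\tfrac{\log m}{m})$; polynomial running time is inherited from $\GRC$ plus the $O(\log^2 n)$ extra work of the checks, and adaptivity is built in. (Strictly, $\GRtwo$ may output a wrong graph with probability $O(\tfrac{\log m}{m})$, which is absorbed into the stated failure probability.)

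The step I expect to be the genuine obstacle, the rest being the routine geometric-series bookkeeping of a doubling search, is calibrating the consistency test: because $m$ is unknown we must commit to $t=\Theta(\log n)$ once and for all and then verify that this single choice is simultaneously cheap enough — which uses $\log^2 m=O(m)$ so that the $O(i_0\log n)$ test queries never dominate $\tfrac{m\log n}{\log m}$ — and reliable enough — which uses $m\le n^2$ so that $n^{-\Omega(1)}$ stays comfortably below $\Theta(\log m/m)$. The companion deterministic statement $\GRone$ follows from the same doubling applied to $\mathsf{GRBM}$ of \Cref{lemma bmcross}, with no randomized test needed: the reconstruction scheme for parameter $m_i$ can certify whether the hidden graph has more than $m_i$ edges, so one simply doubles $m_i$ until it succeeds, and the identical geometric sum bounds the queries by $O\!\big(\tfrac{m\log n}{\log m}\big)$.
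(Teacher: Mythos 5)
Your doubling skeleton is exactly the paper's: guess $m_i=2^i$, run the algorithm of \Cref{lemma choi} with a per-scale query cap, verify the candidate, stop at the first scale that verifies, and charge the cost to a geometric sum that is dominated by the scale $2^{i^*}\approx m$. The place where you diverge, and where your argument falls short of the statement as written, is the verification step. The theorem asserts an absolute dichotomy: the algorithm \emph{either} reconstructs $G$ correctly \emph{or} aborts; only the event ``abort or exceed $\frac{C_{\GRtwo}m\log n}{\log m}$ queries'' is allowed to have probability $O(\frac{\log m}{m})$. Your randomized consistency test (random $\CROSS(A,B)$ checks, each catching a wrong candidate with probability $\ge 1/4$) leaves a residual probability that a wrong candidate at some scale $i<i_0$ passes all $t$ checks and is output; you acknowledge this and try to ``absorb'' it into the failure probability, but that proves a strictly weaker, Monte-Carlo-output statement. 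This is not a cosmetic distinction: the paper invokes \Cref{lemma eochoi} inside its component-merging phase and relies on the fact that any non-aborting run returns only genuine edges of $G$ in order to get the zero-error (Las Vegas) guarantee of \Cref{thm:1}; an algorithm that can emit spurious edges with small probability would break that downstream argument.

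The fix is easy and is precisely what the paper does, exploiting non-negativity of the weights: when a candidate edge set of size $\le 2^i$ is returned, query each reported edge individually (at most $2^i$ $\CROSS$ queries) to confirm it exists with the reported weight, and make one $\CROSS(U,V)$ query to get the total weight of $G$; since all weights are positive, equality of the reported total with the true total certifies that no edge was missed. This verification is deterministic, costs $O(2^i)$ queries (negligible against $\frac{2^i\log n}{i}$), and restores the ``reconstructs or aborts'' semantics, eliminating both the $\Theta(\log n)$ random checks and the calibration you flag as the delicate point (the $m\le n^2$ and $\log^2 m=O(m)$ bookkeeping). With that substitution your analysis of the good scale $i^*=\lceil\log_2 m\rceil$ and the geometric query bound matches the paper's proof (the paper bounds the sum via its Proposition on $\sum t_i/\log t_i$ rather than direct geometric decay, which is immaterial). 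Your closing remark about the deterministic variant should be repaired the same way: the nonadaptive scheme run with too small a guess does not by itself ``certify'' anything, but the per-edge-plus-total-weight check does.
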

\end{subtheorem}


\begin{proof}
    See~\Cref{Appendix:A}.
\end{proof}

The next subroutine we need is the simple DFS-style algorithm to find a spanning forest. In particular, if the number of connected components is $q$, then the remaining edges can be found in $O(q\log n)$ many queries. This was also used by~\cite{Apers} for unweighted graphs but an inspection of their proof shows that it readily works with weights as well. The idea is that even in a weighted non-negative  vector, a single element in the support can be found using binary search, and every connected component can find an edge coming out of it (if they exist) in $O(\log n)$ queries.

\begin{lemma} \label{lemma dfs}~\citep[Paraphrasing][Lemma 5.1]{Apers} Let $G = (V, E)$ be an $n$-vertex weighted graph with non-negative weights. Let $G'$ be a contraction of $G$ with $q$ many supervertices, which are given explicitly as the partition $P = {A_1, \cdots, A_q}$ of $V$. There is a deterministic algorithm $\mathsf{DFSSpanningForest}$ that takes in $G, G'$ and outputs a set of edges $F \subseteq E$ that form a spanning forest of $G'$ and makes $O(q \log n)$ $\CUT$ queries to $G$.
\end{lemma}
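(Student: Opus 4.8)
The plan is to grow a spanning forest of $G'$ one supervertex at a time by a depth-first traversal, the only substantive ingredient being a binary-search routine that converts a positive cut value into an explicit crossing edge of $G$. Concretely, fix a linear order $v_1,\dots,v_n$ on $V$, and suppose $S\subsetneq V$ is a union of some blocks $A_i$ with $\CUT(S)>0$. Since all weights are non-negative, the map $j\mapsto \CROSS\!\big(S,\{v_i:v_i\notin S,\ i\le j\}\big)$ is non-decreasing, vanishes at $j=0$, and equals $\CUT(S)>0$ at $j=n$; a binary search over $j$ finds the least index where it turns positive, exhibiting a vertex $b\notin S$ with an edge to $S$. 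A second, symmetric binary search over the elements of $S$ in the fixed order, using queries $\CROSS\!\big(\{v_i:v_i\in S,\ i\le j\},\{b\}\big)$, then identifies $a\in S$ with $(a,b)\in E$ (and the $\CROSS$ values also reveal $w(a,b)$). By \Cref{prop:1} each $\CROSS$ query is three $\CUT$ queries, so this primitive costs $O(\log n)$ $\CUT$ queries.

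Next I would run the obvious component-growing loop. Maintain a set $\calT\subseteq\{A_1,\dots,A_q\}$ of discovered supervertices, its vertex union $S=\bigcup_{A_i\in\calT}A_i$, and a set $F\subseteq E$, all initially empty. While some block is undiscovered: query $\CUT(S)$; if $S=\emptyset$ or $\CUT(S)=0$, then no edge of $G$ leaves $S$, so $\calT$ is a union of connected components of $G'$ --- move an arbitrary undiscovered block into $\calT$ and continue. Otherwise $\CUT(S)>0$, so invoke the primitive to obtain an edge $(a,b)$ with $a\in S$; the block $A_j$ containing $b$ is automatically undiscovered because the $A_i$ partition $V$ (if $A_j\in\calT$ then $b\in S$, a contradiction), so add $(a,b)$ to $F$ and $A_j$ to $\calT$. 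Every added edge joins the current tree to a new supervertex, so the contraction of $F$ stays acyclic throughout, and upon termination it is a spanning forest of $G'$ with $F\subseteq E$ by construction.

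For the query bound, each iteration enlarges $\calT$ by one block, so there are at most $q$ iterations, each spending one $\CUT$ query plus at most one $O(\log n)$-query call to the primitive --- in total $O(q\log n)$ $\CUT$ queries. I do not expect any real obstacle here: the only points needing care are the monotonicity that licenses both binary searches, which is precisely where non-negativity of the weights enters, and the remark that an edge leaving a union of blocks must enter a block not yet in $\calT$, which is what makes the contracted $F$ a forest rather than merely connecting already-discovered components.
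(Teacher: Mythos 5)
Your proposal is correct and takes essentially the same route as the paper, which simply cites Apers (Lemma 5.1) and notes that the argument extends to non-negative weights because binary search finds a support element of a non-negative cut vector in $O(\log n)$ $\CUT$ queries: you grow the discovered set of supervertices and charge one such binary search (plus one $\CUT$ query) to each newly discovered block, exactly the DFS-style scheme being paraphrased. The only difference is that you spell out, self-containedly, the monotonicity enabling the binary searches and the acyclicity/maximality bookkeeping that the paper leaves to the citation.
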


For our deterministic algorithm, we need the following result about the coin-weighing problem.
Let $x\in \{0,1\}^N$ be an unknown Boolean vector, and suppose we can only access it via querying a subset $S\subseteq [N]$
and obtaining $\sum_{i\in S} x_i$. 

\begin{restatable}[\cite{Bshouty09}]{lemma}{bshouty}
\label{lem:bshouty}
	Let $x\in \{0,1\}^N$ be an unknown Boolean vector accessed via querying a subset $S\subseteq [N]$ and obtaining $\sum_{i\in S} x_i$ 
 (sum-query access). If $x$ has $d$ ones, then there is a polynomial time deterministic algorithm $\Bshouty$ to reconstruct $x$ which makes $O(d\log(N/d)/\log d)$ sum-queries.
\end{restatable}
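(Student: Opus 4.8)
The plan is to recognize this as the classical \emph{coin-weighing} problem (equivalently, quantitative group testing) restricted to $0/1$ weights and to $d$-sparse instances, and to obtain it from the coin-weighing literature, ultimately from \cite{Bshouty09}. First observe that a single query on $S=[N]$ returns $\sum_i x_i$, so we may assume $d$ is known to the algorithm; if $d\le 1$ one binary search over $[N]$ locates the unique one using $O(\log N)$ queries, within the claimed bound under the convention $\log d:=\max(\log d,2)$, so assume $2\le d$, and by working with the complement vector if $d>N/2$ (which is free, since $\sum_{i\in S}(1-x_i)=|S|-\sum_{i\in S}x_i$) also $d\le N/2$. It remains to identify the support $T=\{i:x_i=1\}$.

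I would first explain the query bound $t=O(d\log(N/d)/\log d)$ through a non-adaptive \emph{detecting matrix} and a routine probabilistic existence argument. Call $M\in\{0,1\}^{t\times N}$ $d$-detecting if $x\mapsto Mx$ is injective over $d$-sparse $x\in\{0,1\}^N$; equivalently $Mz\neq 0$ for every $z\in\{-1,0,1\}^N$ that arises as a difference of two $d$-sparse Boolean vectors, so that $s:=\|z\|_0\in[1,2d]$. For a uniformly random $0/1$ row $r$ and a fixed such $z$ with $a$ entries $+1$ and $b=s-a$ entries $-1$ in its support, $\langle r,z\rangle=A-B$ with $A,B$ independent binomials, and Vandermonde's identity gives $\pr{\langle r,z\rangle=0}=\binom{s}{a}/2^{s}\le 1/\sqrt{s}$ (and $=1/2$ when $s=1$). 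Taking $M$ with $t$ independent random rows and union-bounding over the at most $\binom{N}{s}2^s$ difference vectors of each weight $s$, one is left (after taking logarithms) with the requirement $t\gtrsim\max_{2\le s\le 2d}\frac{s\log(N/s)}{\log s}$, together with an additive $O(\log N)$ from the weight-$1$ terms. Over $s\in\{2,\dots,2d\}$ with $d\le N/2$ this maximum is attained at $s=\Theta(d)$ (its value at $s=2$ being $O(\log N)$, which is anyway dominated), giving $t=O(d\log(N/d)/\log d)$; hence a $d$-detecting matrix with this many rows exists and certifies information-theoretic recoverability of $x$.

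The substantive step — and what I would cite \cite{Bshouty09} for — is making the above \emph{deterministic and polynomial time}: one needs an \emph{explicit} $d$-detecting matrix with $O(d\log(N/d)/\log d)$ rows together with a \emph{polynomial-time decoder} recovering $x$ from $Mx$, and the decoder is the real obstacle, since a detecting matrix by itself affords no efficient inversion. This is exactly what Bshouty's construction provides, by endowing the matrix with additional structure — morally, a hashing/separation layer that isolates the support elements into distinct buckets, composed with a Reed--Solomon-type layer pinning down the element inside each bucket, so that decoding reduces to polynomial interpolation. I would therefore present the probabilistic calculation above only as intuition for the query count and cite \cite{Bshouty09} for the efficient algorithm. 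Alternatively, one can give a self-contained \emph{adaptive} polynomial-time algorithm by recursive bisection: maintain a set of intervals with their (known) support-counts, and in each round batch-process those intervals whose counts lie in a common dyadic band $[2^{k-1},2^{k})$ — at most $d/2^{k-1}$ of them — by reconstructing the vector of their half-counts as one bounded-alphabet coin-weighing instance; summing the resulting geometric series over the bands and the $O(\log(N/d))$ bisection levels again gives $O(d\log(N/d)/\log d)$. Either way, the punchline is that the query complexity is a short probabilistic estimate while the efficient constructive realization is the heart of \cite{Bshouty09}.
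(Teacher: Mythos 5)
The paper gives no proof of this lemma at all---it is imported verbatim from \cite{Bshouty09}---and your proposal ultimately does the same, deferring the substantive part (an explicit $d$-detecting matrix with $O(d\log(N/d)/\log d)$ rows together with a polynomial-time decoder) to that reference. Your supplementary probabilistic estimate for the query count is a correct sanity check, so the proposal is consistent with the paper's treatment.
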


\section{Zero Error Randomized $O(n)$-Query Algorithm}\label{sec:3}

As mentioned in~\Cref{sec:tech}, to underscore our main technical contribution, in~\Cref{subsec:1} we focus on the problem of given a bipartite weighted graph $G = (R \sqcup B, E)$ with $2n$ singletons with the promise that every vertex in $R$ has a neighbor in $B$, how to make $O(n)$ queries in expectation to learn a subgraph where we learn at least one neighbor incident to every vertex in $R$. Thus, the number of connected components go down from $2n$ to $< n$. 
This algorithm is described as~\Cref{alg SRCC}. 

The generalization of this that is the subroutine to the spanning forest algorithm is~\Cref{alg RCC}. This algorithm takes a graph with $t$ connected components and makes $O(t)$ queries in expectation to learn edges which leads to $<ct$ connected components for some $c < 1$. This then can be plugged into the~\cite{Apers} framework to give the spanning forest algorithm. This is described in~\Cref{subsec:2}.

%



\subsection{The Main Idea: Learning edges in a bipartite graph}\label{subsec:1}

We assume our unknown, weighted, undirected graph $G = (R\sqcup B, E)$ is {\em bipartite} with vertices being red, $R$, or blue, $B$, and $|R| = |B| = n$, and every vertex $r\in R$ has at least one neighbor in $B$. We also know the bipartition.
The goal is to learn at least one edge incident on every red vertex. In this section, we describe a randomized algorithm to do this making $O(n)$ queries in expectation. As explained in~\Cref{sec:tech}, this forms the main heart of the spanning forest algorithm since these learnt edges reduces the number of connected components by a constant factor.

The algorithm proceeds in $\lceil \log_2 n\rceil$ iterations. It maintains a subset $\calR_i \subseteq R$ with $\calR_0 = R$ which is supposed to signify the subset of red vertices
which hasn't discovered an edge incident on them.
In the $i$th iteration, the algorithm samples a subset blue vertices $\calB_i \subseteq B$ at a rate $\frac{2^i}{n}$. 
We then reconstruct all the edges in $E(\calR_i, \calB_i)$.
Then, we remove any vertices $r$ in $\mathcal{R}_i$ participating in any of these reconstructed edges since we have already found an edge incident on them, to get $\calR_{i+1}$.
Note that in the final round $\ell := \lceil \log_2 n\rceil$, the subset $\calB_\ell = B$, and thus we would reconstruct all the edges between $E(\calR_\ell, B)$. In particular, since
all vertices in $\calR_\ell$ have a neighbor in $B$, we would suceed with probability $1$. 

\begin{algorithm}
\caption{$\mathsf{SkeletonReduceConnectedComponents}$}\label{alg SRCC}
\begin{algorithmic}[1]
\Require $\CUT$ access to a bipartite graph $G = (R\sqcup B,E)$ with $|R|=|B|=n$ and unknown edges with weights. Assumption:  each vertex in $R$ each has a neighbor in $B$. 
\Ensure Set of edges with at least one edge incident on every $r\in R$.
\State $\mathcal{R}_0 \gets R$.
\State $i \gets 0$.
\For{$i \le \lceil\log_2 n\rceil$}
    \State Sample a set $B_i$ with every $b \in B$ sampled with probability $\frac{2^i}{n}$.
    \State Recover edges $E_i := E(\mathcal{R}_i, B_i)$ using $\mathsf{GR1}$ in~\Cref{lemma eobm}.
    \State $\mathcal{R}'_i = \{v \in \mathcal{R}_i |\ v \text{ is an endpoint of some } e \in E_i\}$.
    \State $\mathcal{R}_{i + 1} \gets \mathcal{R}_i - \mathcal{R}'_i$.
    \State $i \gets i + 1$.
\EndFor
\State \Return $\bigsqcup_{i \in [\lceil\log n\rceil]} E_i$.
\end{algorithmic}
\end{algorithm}

\begin{theorem} \label{corollary alg1property}
The expected number of edges is $\ex{|\bigsqcup_{i \in [\lceil\log n\rceil]} E_i|} \leq 5n$.
\end{theorem}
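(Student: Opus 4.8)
The plan is a per--red--vertex charging argument. First I would record a structural fact: since $\GRone$ (\Cref{lemma eobm}) is deterministic and exactly reconstructs $E(\mathcal R_i,B_i)$, the only randomness is in the samples $B_i$, and every edge $e=(r,b)$ belongs to at most one of the sets $E_i$. Indeed, if $e\in E_i$ then $r\in\mathcal R_i$ and $b\in B_i$, so $r$ is an endpoint of a recovered edge and is deleted in lines 6--7; hence $r\notin\mathcal R_{i'}$ for all $i'>i$, and $r$ could not have been in $\mathcal R_i$ had $e$ been recovered earlier. Consequently $\bigl|\bigsqcup_i E_i\bigr| = \sum_{i=0}^{\ell}|E_i| = \sum_{r\in R}X(r)$, where $\ell=\lceil\log_2 n\rceil$ and $X(r)$ is the number of edges incident on $r$ that are ever recovered, so it suffices to prove $\ex{X(r)}\le 5$ (in fact I will get $\le 3$) for every $r$.

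Fix $r$ and let $d=d(r)\ge 1$ be its degree into $B$. In iteration $i$ the number of edges incident on $r$ recovered is $\mathbbm 1[r\in\mathcal R_i]\cdot|N(r)\cap B_i|$. The event $\{r\in\mathcal R_i\}$ is measurable with respect to $B_0,\dots,B_{i-1}$ while $|N(r)\cap B_i|$ is measurable with respect to $B_i$; since the $B_j$ are independent, these two factors are independent. Moreover $r\in\mathcal R_i$ iff none of $r$'s $d$ neighbours was sampled in any of the first $i$ iterations, which by independence across iterations has probability $p_i:=\prod_{j=0}^{i-1}(1-2^j/n)^{d}$ (with $p_0=1$; note $2^j/n<1$ for $j\le\ell-1$, so this is a genuine product of probabilities). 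Hence the expected contribution of iteration $i$ is $p_i\cdot d\cdot\min\{2^i/n,1\}\le p_i\, d2^i/n$, and $\ex{X(r)}\le\sum_{i=0}^{\ell}p_i\, d2^i/n$.

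The heart of the argument is the telescoping estimate $p_i\, d2^i/n\le 2(p_{i-1}-p_i)$ for every $i\ge 1$. Put $q:=(1-2^{i-1}/n)^d=p_i/p_{i-1}\in(0,1]$. From $\ln(1-x)\le -x$ we get $-\ln q\ge d2^{i-1}/n$, hence $d2^i/n\le -2\ln q$; combined with the elementary inequality $-q\ln q\le 1-q$ on $(0,1]$ (the function $q\mapsto 1-q+q\ln q$ vanishes at $q=1$ and has derivative $\ln q<0$ on $(0,1)$) this gives $q\cdot d2^i/n\le -2q\ln q\le 2(1-q)$, and multiplying through by $p_{i-1}$ yields the claim. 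For $i=0$ we simply use $p_0\, d/n=d/n\le 1$. Summing, $\ex{X(r)}\le \tfrac{d}{n}+\sum_{i\ge1}2(p_{i-1}-p_i)\le 1+2p_0=3$, so $\ex{\bigl|\bigsqcup_i E_i\bigr|}\le 3n\le 5n$.

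I expect the main obstacle to be finding the right potential so that the geometrically growing sampling rate $2^i$ cancels against the geometric decay of the survival probabilities $p_i$; once one notices that $p_i2^i/n$ telescopes against $p_{i-1}-p_i$ up to a constant, everything else is routine algebra and calculus. A minor point to handle is the last iteration $i=\ell$, where the sampling rate $2^\ell/n$ may exceed $1$ so that $B_\ell=B$; there $\ex{|N(r)\cap B_\ell|}=d\le d2^\ell/n$, so the same bound applies, and one should also double-check the independence claim that, conditioned on $r\in\mathcal R_j$, the probability that no neighbour of $r$ lies in $B_j$ is exactly $(1-2^j/n)^d$.
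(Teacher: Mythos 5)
Your proof is correct and reaches the stated bound (indeed with the better constant $3n\le 5n$), but it takes a genuinely different route in the key estimate. The paper proves the per-edge bound of \Cref{lemma alg1property}, namely $\pr{e\in\bigsqcup_i E_i}\le 5/d(r)$, by writing $\pr{\mathcal{E}_i}\le \frac{2^i}{n}e^{-(2^i-1)d(r)/n}$ and splitting the sum over $i$ at $i^*=\lfloor\log_2 (n/d(r))\rfloor$: below $i^*$ the terms form a geometric series contributing $2/d(r)$, and above $i^*$ consecutive ratios are $2e^{-2^k}$, so the tail is dominated by its first term, contributing $3/d(r)$; summing over the $d(r)$ edges at each $r$ then gives $5n$. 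You instead work per red vertex and replace the case analysis by the telescoping inequality $p_i\,d2^i/n\le 2(p_{i-1}-p_i)$, obtained from $\ln(1-x)\le -x$ together with $-q\ln q\le 1-q$ on $(0,1]$, so the whole sum collapses to $d/n+2(p_0-p_\ell)\le 3$ with no tail estimate. The probabilistic skeleton is the same in both arguments (the survival probability $p_i=\prod_{j<i}(1-2^j/n)^{d(r)}$ and independence across iterations, which you justify correctly by noting $\GRone$ is deterministic, and your handling of the capped sampling rate in the last iteration is fine); the difference is purely in how the sum over iterations is controlled. What the paper's per-edge formulation buys is a statement quoted verbatim in the analysis of \Cref{alg RCC} (\Cref{lemma alg2property}), though only in the form of an expected edge count, so your per-vertex bound would serve equally well there; what your formulation buys is a shorter, case-free argument, a sharper constant, and the explicit observation (implicit in the paper's use of $\bigsqcup$) that each edge can appear in at most one $E_i$.
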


\noindent
{\bf Remark.} \emph{
Since we run for $\Theta(\log n)$ phases, it can be shown that if 
$|\bigsqcup_{i \in [\lceil\log n\rceil]} E_i| = \Theta(n)$, then the number of queries needed to recover using $\GRone$ is also $\Theta(n)$. This uses a simple analytic claim 
present in~\Cref{claim 3}. This can be used to show that the expected running time of the above algorithm is $\Theta(n)$.
}

Define $d(r): R \to [n]$ to be a function from $r \in R$ to the number of $r$'s neighbors in $B$. Note that we don't know how to obtain or even estimate $d(r)$ in $O(1)$ $\CUT$ queries, and this
definition is {\em only for analysis}. The proof of the above theorem follows almost immediately from the next lemma: for each edge $e\in E$ let $X_e$ be the indicator that it is present in the union.
Then the next lemma implies that $\ex{|\bigsqcup_{i \in [\lceil\log n\rceil]} E_i|} \leq  \sum_{r\in R} \sum_{b: (r,b)\in E} \frac{5}{d(r)} = 5n$.

\begin{lemma} \label{lemma alg1property}
Fix an edge $e = (r, b) \in E$ where $r \in R, b \in B$. $\pr{e \in \bigsqcup_{i \in [\lceil\log n\rceil]} E_i} \le \frac{5}{d(r)}$.
\end{lemma}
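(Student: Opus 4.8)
The plan is to fix the edge $e=(r,b)$ and condition on the iteration in which $r$ gets removed from $\calR$. Let $d:=d(r)$, and let $i^\star := \lceil \log_2(n/d)\rceil$ be the "critical iteration" at which the sampling probability $2^{i}/n$ first exceeds (roughly) $1/d$. The key observation is that $e$ can end up in $\bigsqcup_i E_i$ only if $r\in\calR_i$ and $b\in B_i$ for the {\em same} $i$; and once $r$ is removed it stays removed. I would split into two regimes. For $i < i^\star$ ("early" iterations), the sampling rate is small and $b$ is unlikely to be sampled, so $\pr{b\in B_i}=2^i/n$ is tiny; summing a geometric series over $i=0,\dots,i^\star-1$ gives a contribution of at most $\sum_{i<i^\star} 2^i/n \le 2^{i^\star}/n \le 2/d$ (using $2^{i^\star}\le 2n/d$). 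For $i\ge i^\star$ ("late" iterations), the point is that by iteration $i^\star$ the vertex $r$ has very likely already been removed, because with sampling rate $\ge 1/d$ on $d$ neighbors, at least one neighbor of $r$ lands in $B_i$ with constant probability. So I would bound $\pr{r\in\calR_i}$ for $i\ge i^\star$ by the probability that {\em none} of $r$'s $\ge$ -- more precisely, $d$ -- neighbors were sampled in any of iterations $i^\star,\dots,i-1$, which decays geometrically in $i-i^\star$.

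Concretely, the two quantitative inputs I would isolate are: (1) for a single iteration $j$, $\pr{\text{no neighbor of }r\text{ sampled in }B_j} = (1-2^j/n)^{d} \le e^{-2^j d/n} \le e^{-1}$ for $j\ge i^\star$ (here I use $2^{i^\star}d/n \ge 1$, and that this only grows with $j$); hence $\pr{r\in\calR_i}\le e^{-(i-i^\star)}$ for $i\ge i^\star$. (2) Trivially $\pr{b\in B_i}=2^i/n$. Since the events "$r\in\calR_i$" (which depends only on samples $B_0,\dots,B_{i-1}$) and "$b\in B_i$" are independent, we get $\pr{r\in\calR_i \text{ and } b\in B_i} \le e^{-(i-i^\star)}\cdot 2^i/n$ for $i\ge i^\star$. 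Summing over $i\ge i^\star$: $\sum_{i\ge i^\star} e^{-(i-i^\star)} 2^i/n = \frac{2^{i^\star}}{n}\sum_{k\ge 0}(2/e)^k = \frac{2^{i^\star}}{n}\cdot\frac{1}{1-2/e}$, and since $2^{i^\star}\le 2n/d$ this is at most $\frac{2}{d}\cdot\frac{e}{e-2} = \frac{2e}{(e-2)d}$. Then I would union-bound over all $i$: $\pr{e\in\bigsqcup_i E_i} \le \sum_{i<i^\star}\pr{b\in B_i} + \sum_{i\ge i^\star}\pr{r\in\calR_i, b\in B_i} \le \frac{2}{d} + \frac{2e}{(e-2)d}$. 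Numerically $\frac{2e}{e-2} \approx 7.57$, which overshoots $5$, so I would need to be slightly more careful — e.g. note that for $i<i^\star$ one also needs $r\in\calR_i$, and more importantly that the "early" bound can be absorbed by choosing the critical index as $i^\star=\lfloor\log_2(n/d)\rfloor$ or by a tighter geometric bookkeeping that merges both sums; alternatively one simply rechecks the constants. The $5$ in the statement is clearly a convenient round constant, so the real content is the $O(1/d)$ bound and the precise constant is a matter of tuning the split point and the $e^{-1}$ estimate.

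The main obstacle I expect is making the independence/conditioning argument airtight: the event $r\in\calR_i$ is determined by $B_0,\dots,B_{i-1}$ and (through the reconstruction subroutine $\GRone$, which is deterministic) nothing else, so it is independent of $B_i$ and in particular of the event $b\in B_i$ — this needs to be stated carefully, and one must make sure $\GRone$ succeeding is not a probabilistic event here (it is deterministic, per \Cref{lemma eobm}, so this is fine). A secondary subtlety is the boundary iteration $i=\ell=\lceil\log_2 n\rceil$ where $B_\ell=B$ deterministically; this is automatically handled since $i^\star\le\ell$ always and the late-iteration bound applies. The rest — the two geometric series and the final substitution $\sum_{r}\sum_{b:(r,b)\in E}\frac{5}{d(r)} = \sum_r d(r)\cdot\frac{5}{d(r)} = 5n$ to deduce \Cref{corollary alg1property} — is routine.
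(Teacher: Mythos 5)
Your decomposition is essentially the paper's: exploit independence of $\{b\in\calB_i\}$ from the survival event ``$r\in\calR_i$'' (determined by $\calB_0,\dots,\calB_{i-1}$ and the deterministic $\GRone$), split at a critical index near $\log_2(n/d(r))$, bound the early terms by $\sum_i 2^i/n\le 2/d(r)$, and bound the late terms by (survival probability)$\times 2^i/n$. The gap is quantitative but real: the lemma asserts the constant $5$, and your late-iteration estimate throws away too much. Bounding each round's survival factor for $j\ge i^\star$ merely by $e^{-1}$ gives $\pr{r\in\calR_i}\le e^{-(i-i^\star)}$, so the late sum is at most $\tfrac{2^{i^\star}}{n}\cdot\tfrac{e}{e-2}\le \tfrac{2e}{(e-2)d(r)}\approx \tfrac{7.6}{d(r)}$ and the total is about $9.6/d(r)$, as you acknowledge. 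Neither repair you suggest closes this: replacing $\lceil\cdot\rceil$ by $\lfloor\cdot\rfloor$ only shifts the split by one and leaves the per-term ratio $2/e$ (hence the factor $e/(e-2)\approx 3.79$) intact, and the ``tighter geometric bookkeeping'' is exactly the step you have not carried out.

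The missing quantitative idea is to keep the entire accumulated product rather than one $e^{-1}$ per round: $\pr{r\in\calR_i}\le \prod_{j=0}^{i-1}\bigl(1-2^j/n\bigr)^{d(r)}\le e^{-(2^i-1)d(r)/n}$, which decays \emph{doubly} exponentially beyond the critical index. The paper (with $i^*=\lfloor\log_2 (n/d(r))\rfloor$) then compares consecutive late terms: the ratio of the $(i{+}1)$st to the $i$th term is $2e^{-2^i d(r)/n}\le 2e^{-2^{k}}$ with $k=i-(i^*+1)$, so the late sum is at most $\tfrac32$ times its first term, which is at most $2/d(r)$; adding the early contribution $2/d(r)$ gives exactly $5/d(r)$. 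With your single-exponential bound you prove only $\pr{e\in\bigsqcup_i E_i}=O(1/d(r))$ (roughly $10/d(r)$), which would still suffice downstream after adjusting the budget constant and the Markov threshold in \Cref{lemma alg2property}, but it does not prove the lemma as stated. Everything else in your plan --- the independence argument, the determinism of $\GRone$, the handling of the final iteration, and the reduction to the $5n$ bound of \Cref{corollary alg1property} --- matches the paper and is fine.
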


\begin{proof}
Let $\mathcal{E}_i$ denote the event that $e \in E_i$, that is, the reconstructed graph in the $i$th round. 
Our goal is to show $\pr{\bigsqcup_{i}\mathcal{E}_i} \le \frac{5}{d(r)}$. 

For the event $\mathcal{E}_i$ to occur, the vertex $r$ must have ``survived'' the first $(i-1)$ rounds (that is, no edge incident to $r$ was discovered before) and the vertex $b$ must be in the set $\calB_i$.
To this end, for any $0\leq j\leq i-1$, let $\mathcal{F}_j$ be the event that no neighbor of $r$ is in $\calB_j$. First note that 
$$\pr{\mathcal{F}_j} = \left(1 - \frac{2^j}{n}\right)^{d(r)}$$
since none of its $d(r)$ neighbors are present in $\calB_j$. Since all the $\mathcal{F}_j$'s, $0\leq j\leq i-1$, and the event $b\in \calB_i$ are mutually independent, 
we get that
\begin{equation}\label{eq:aha}
\pr{\mathcal{E}_i} = \pr{b \in \calB_i}\prod_{j = 0}^{i - 1} \pr{\mathcal{F}_j} = \frac{2^i}{n}\cdot \prod_{j = 0}^{i - 1}\left(1 - \frac{2^j}{n}\right)^{d(r)} 
< \frac{2^i}{n}\cdot e^{-\frac{(2^i - 1)d(r)}{n}}
\end{equation}
where we used $\forall x\neq 0, 1+x < e^x$ and geometric series formula for the inequality.

Note that when $2^i \approx n/d(r)$, the RHS is $\approx 1/d(r)$, and for every other $i$ either the first term in the product or the second term in the product in the RHS are orders of magnitude smaller.
Therefore, the sum of $\pr{\mathcal{E}_i}$'s as $i$ ranges from $0$ to $\log n$ can be bounded by $O(1/d(r))$. More precisely, 
let $i^* = \lfloor \log_2 \frac{n}{d(r)} \rfloor$. We consider the following two cases:

\begin{itemize}
    \item $i \le i^*$: since $\frac{(2^i - 1)d(r)}{n} \ge 0$, $e^{-\frac{(2^i - 1)d(r)}{n}} \le 1$. In the summation $\sum_{i = 0}^{i^*}\frac{2^i}{n}e^{-\frac{(2^i - 1)d(r)}{n}}$, each term is at most $\frac{2^i}{n}$, thus 
    \begin{equation}\label{eq:c1}
    \sum_{i = 0}^{i^*}\frac{2^i}{n}e^{-\frac{(2^{i} - 1)d(r)}{n}} \le \sum_{i = 0}^{i^*} \frac{2^i}{n}\le 2 \cdot \frac{2^{i^*}}{n} \le \frac{2n}{d(r)} \frac{1}{n} = \frac{2}{d(r)}.
    \end{equation}
    \item $i \ge i^* + 1$: let $k = i - (i^* + 1)$. Consider the ratio between $i + 1$th term and $i$th term, 
    \begin{alignat}{7}
    	  \left(\frac{2^{i + 1}}{n}e^{-\frac{(2^{i + 1} - 1)d(r)}{n}}\right) / \left(\frac{2^{i}}{n}e^{-\frac{(2^{i} - 1)d(r)}{n}}\right)  &&~=~& 2e^{-\frac{(2^{i + 1} - 1)d(r)}{n} + \frac{(2^{i} - 1)d(r)}{n}} ~=~2e^{\frac{2^id(r) - 2^{i+1}d(r)}{n}} \notag \\
    	  &&~=~& 2e^{-\frac{2^id(r)}{n}} 
    	  ~\leq~ 2e^{-\frac{2^{k + \log_2\frac{n}{d(r)}}d(r)}{n}}  \label{eq:007}\\
    	  &&~=~& 2e^{-2^k} \notag
    \end{alignat}
where \eqref{eq:007} follows because $i = k+ (i^* + 1) \ge k + \log_2\frac{n}{d(r)}$.
Notice that $2e^{-2} < 0.3$ and $2e^{-2} \cdot 2e^{-4} = 0.001$. The numerical value imply that most of the mass in $\sum_{i = i^* + 1}^{\lceil \log n\rceil}\frac{2^{i}}{n}e^{-\frac{(2^{i} - 1)d(r)}{n}}$ is concentrated in the first two terms. One can easily check that $\sum_{i = i^* + 1}^{\lceil \log n\rceil}\frac{2^{i}}{n}e^{-\frac{(2^{i} - 1)d(r)}{n}} \le \frac{3}{2} \cdot \frac{2^{i^* + 1}}{n}e^{-\frac{(2^{i^* + 1} - 1)d(r)}{n}}$, in other words, the sum is at most $\frac{3}{2}$ of the first summand. 

The first term is $\frac{2^{i^* + 1}}{n}e^{-\frac{(2^{i^* + 1} - 1)d(r)}{n}} \le \frac{2}{d(r)}e^{-\frac{(2^{i^* + 1} - 1)d(r)}{n}} \le \frac{2}{d(r)}$. Therefore
\begin{equation}\label{eq:c2}
\sum_{i = i^* + 1}^{\lceil \log n\rceil}\frac{2^{i}}{n}e^{-\frac{(2^{i} - 1)d(r)}{n}} \le \frac{3}{d(r)}.
\end{equation}
\end{itemize}
\Cref{eq:c1,eq:c2} together imply  $\sum_{i = 0}^{\lceil\log n\rceil} \pr{\mathcal{E}_i} \leq 5/d(r)$ by union bound.
\end{proof}

\noindent
\textbf{Remark}: \emph{It is worthwhile to note that the above algorithm {\em doesn't} use {\em non-negativity} of weights. This is because of the promise that every red node has at least one blue neighbor.
In the full algorithm (see~\Cref{subsec:2}, or~\cite{Apers,ChakrabartyL23}), the bipartition is formed randomly, and we abort the bipartition if less than a constant fraction 
of red nodes have blue neighbors. However, checking whether a red node $r$ has a blue neighbor can be done with one $\CROSS$ query between $r$ and $B$ {\em only if} the weights are non-negative.
If weights were allowed to be negative, this step wouldn't work. Indeed, with negative weights, it is not to hard to show that $\widetilde{\Omega}(n^2)$ queries would be needed to solve the connectivity question.
}

\subsection{Full Algorithm for Learning Spanning Forest in $O(n)$ queries}\label{subsec:2}

{\em Outline.}
Like previous algorithms~\cite{Apers, ChakrabartyL23}, our spanning forest algorithm proceeds in a Bor\r{u}vka style framework.
It proceeds in phases, and in each phase, the algorithm
	begins with $t$ of connected components $C_1, \ldots, C_t$ and a set of learnt edges such that
 each $C_i$ is connected in these learnt edges. In a phase, the algorithm discovers edges crossing these components (thereby connecting them) in such a way that 
	$O(t)$ $\CUT$ queries are made in expectaion, and with the newly discovered edges the number of connected components goes down from $t$ to $ct$ where $c$ is a constant $< 1$. 
This phase forms the heart of the matter, since if this can be done, then simply repeating this till the number components becomes $1$ leads to the $O(n)$ algorithm. 
The details of this is given in~\Cref{alg SFA}, and we now proceed to understanding individual phases. Before doing so, let us introduce the definition of representatives of connected components a la ~\cite{Apers}.

{\em Representatives and Active/Inactive Vertices.} When the connected components are no longer singletons, for each connected component $C_i$ we need to select a vertex as the {\em representative} for $C_i$. 
To do so, one defines 
``active" and ``inactive" vertices. For a fixed $C_i$, if there is an edge $e = (u, v)$ with $u \in C_i$ and $v \in C_j$ for any $j \neq i$, in other words, $u$ has at least a neighbor in a different connected component, we say that the vertex $u$ is ``active". Otherwise $u$ is ``inactive". For a connected component $C_i$, any ``active" vertex in $C_i$ can be its representative. After a phase is run, a representative $u$ can become ``inactive". In this case, we can go through vertices in the (expanded) connected components that $u$ belongs to, that are not ``inactive", and find an active one as the new representative. It can also happen that a connected component has multiple representatives after a phase, in which case we assign an arbitrary one as the representative. Since an ``inactive" vertex will remain ``inactive" throughout, one can maintain a representative for each connected components with an $O(n)$ overhead on the queries. Detailed argument can be bound in the proof of~\Cref{theorem 15}. Henceforth, we assume that each component $C_i$ has a {\em representative} vertex 
$c_i \in C_i$ with the guarantee that $c_i$ has an edge to $V\setminus C_i$. If not, we learned a connected components in the final maximal spanning forest and we don't need to pass it in as part of the input. 

{\em Phase of Reducing Connected Components.}
The following Monte Carlo algorithm (\Cref{alg RCC}), named $\mathsf{ReduceConnectedComponents}$ is the general version of $\mathsf{SkeletonReduceConnectedComponents}$ (\Cref{alg SRCC}). It describes one phase of the algorithm, where we start with $t$ connected components and end with at most $\frac{7}{8}$ connected components with constant probability. 

Lines~\ref{RCC color start} to~\ref{RCC color end} of \Cref{alg RCC} first creates the bipartite graph that was assumed in \Cref{alg SRCC}  by coloring each connected component either red or blue
uniformly at random. It is designed so as to ensure that a constant fraction of the ``red representatives'' have an edge to some blue-component vertex. Recall in~\Cref{alg SRCC}, we assumed {\em every} red vertex had a blue neighbor, but even if a constant fraction had them, even then there would be a constant factor drop. The rest of the algorithm is very similar to~\Cref{alg SRCC} with 
a couple of differences: (i) we maintain a ``budget'' on the queries and if we ever cross it we ABORT, and thus get a Monte-Carlo algorithm, and (ii) instead of the non-constructive deterministic $\GRone$, we replace it with the randomized Monte Carlo $\GRtwo$ from~\Cref{lemma eochoi}, which always ensuring we don't ever make $\omega(t)$ queries. 


\begin{algorithm}[ht!]
\caption{$\mathsf{ReduceConnectedComponents}$}\label{alg RCC}
\begin{algorithmic}[1]
\Require $G_2$ with $t$ connected components $C_1, \cdots, C_t$, each with $\ge 1$ representative.
\Ensure Either return a graph with $\le \frac{7}{8}t$ connected components with constant probability or $\mathsf{ABORT}$.
\State $R \gets \emptyset$.
\State For $i \in [t]$, pick color $c_i$ from $\{red, blue\}$ uniformly at random. Color all vertices in $C_i$ to $c_i$. \label{RCC color start}
\State $B \gets $ vertices colored blue.
\State $R \gets $ red representatives with at least one blue neighbor. \Comment{Can be found with $O(t)$ $\CROSS$ queries.}
\If{$|R| < \frac{t}{8}$}
$\mathsf{ABORT}$. \label{RCC color end}
\EndIf
\State $i \gets 0$.
\State $\budget \gets 120\max(C_{\GRone}, C_{\GRtwo})t.$ \Comment{Recall $C_{\GRone}$ and $C_{\GRtwo}$ from~\Cref{lemma eobm}}
\State $\mathcal{R}_0 \gets R$.
\For{$i \le \log t$}
\State Sample a set $B_i$ with every $b \in B$ sampled w.p. $\frac{2^i}{t}$.
    \While{$\budget > 0$}
	\State Run $\mathsf{GR1}$ or $\mathsf{GR2}$ as described in~\Cref{lemma eobm} or~\Cref{lemma eochoi} on $E_i := E(\mathcal{R}_i, B_i)$, always decrementing $\budget$ by $1$ whenever a $\CROSS$ query is made.
    \If{$\budget = 0$}
        $\mathsf{ABORT}$.\Comment{$\mathsf{ABORT}$ either because sampled edges $> 5t$ or running $\mathsf{GR2}$ has failed}\label{ABORT budget}
    \EndIf
    \State $\mathcal{R}'_i = \{v \in \mathcal{R}_i |\ v \text{ is an endpoint of some } e \in E_i\}$.
    \State $\mathcal{R}_{i + 1} \gets \mathcal{R}_i - \mathcal{R}'_i$.
   \State $i \gets i + 1$.
   \EndWhile
\EndFor
\State \Return{$G_2 + \bigsqcup_{i \in [\lceil\log t\rceil]} E_i$.}
\end{algorithmic}
\end{algorithm}

\begin{lemma} \label{lemma alg2property}
	Let $G = (V, E)$ be a weighted graph with non-negative weights. Suppose $G_2$ is a subgraph of $G$ with $t$ connected components for some $t \ge \frac{n}{\log n}$.~\Cref{alg RCC} makes $O(t)$ queries, and with probability $\geq 1/10$ returns a graph with $\le \frac{7}{8}t$ connected components, or returns $\mathsf{ABORT}$. 
\end{lemma}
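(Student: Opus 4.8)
The plan is to prove the two assertions separately: the $O(t)$ query bound deterministically, and the $\tfrac{1}{10}$ success probability by controlling the two ways~\Cref{alg RCC} can $\mathsf{ABORT}$ (the colouring test and the query budget). Throughout I analyze the instantiation of the loop that uses the deterministic, always-correct $\GRone$ of~\Cref{lemma eobm}; this is a legal choice in the algorithm and makes the output analysis transparent (the argument also goes through for $\GRtwo$, at the cost of more $\mathsf{ABORT}$s). The query bound is essentially by construction: forming $R$ costs at most one $\CROSS$ query per red representative---asking $\CROSS(\{c_i\},B)$, which is positive exactly when $c_i$ has a blue neighbour, using that edge weights are positive---for $O(t)$ $\CROSS$ queries in all, and the reconstruction loop is hard-capped at $\budget=120\max(C_{\GRone},C_{\GRtwo})t$ $\CROSS$ queries; by~\Cref{prop:1} this is $O(t)$ $\CUT$ queries regardless of the random choices.

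Next I would show the colouring test passes with probability at least $\tfrac17$. Let $Z=|R|$. Each representative $c_i$ has, by hypothesis, a neighbour in some other component $C_j$; since the colours of $C_i$ and $C_j$ are independent and uniform, $\pr{C_i\text{ red and }c_i\text{ has a blue neighbour}}\ge\tfrac14$, so $\ex{Z}\ge t/4$. As $Z\le t$, reverse Markov gives $\pr{Z<t/8}=\pr{t-Z>7t/8}\le\frac{\ex{t-Z}}{7t/8}\le\frac{3t/4}{7t/8}=\tfrac67$, so the test is cleared with probability at least $\tfrac17$.

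For the budget, I would couple~\Cref{alg RCC} with the idealized run that never checks the budget (same colouring, same sets $B_i$); the two agree until the budget is hit, so a budget-$\mathsf{ABORT}$ occurs iff the idealized run makes more than $\budget$ $\CROSS$ queries. Let $M^\star$ be the total number of edges reconstructed by the idealized run. Repeating verbatim the calculation of~\Cref{lemma alg1property} with $t$ in place of $n$ and $d(r)$ the number of blue neighbours of $r$, each edge $(r,b)$ with $r\in R$ is reconstructed with probability at most $5/d(r)$, whence $\ex{M^\star\mid\text{colouring}}\le\sum_{r\in R}d(r)\cdot\frac{5}{d(r)}=5|R|\le 5t$ for \emph{every} colouring. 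Then I would invoke the routine analytic estimate referenced after~\Cref{corollary alg1property}: because there are only $\lceil\log_2 t\rceil$ rounds and $t\ge n/\log n$ forces $\log n\le 2\log t$ (for $n$ large), a round reconstructing $m_i\ge\sqrt t$ edges costs $\le\frac{C_{\GRone}m_i\log n}{\log m_i}\le 4C_{\GRone}m_i$ queries, while the at most $\lceil\log_2 t\rceil$ rounds with $m_i<\sqrt t$ cost $O(\sqrt t\log^2 t)=o(t)$ queries in total, so the loop uses at most $4C_{\GRone}M^\star+o(t)$ $\CROSS$ queries. In particular, once $M^\star\le 25t$ the loop uses at most $100C_{\GRone}t+o(t)\le\budget$ queries and there is no budget-$\mathsf{ABORT}$; by Markov, $\pr{M^\star>25t\mid\text{colouring}}\le\tfrac15$ for every colouring, hence also conditioned on passing the colouring test. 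Combining, the algorithm never $\mathsf{ABORT}$s with probability at least $\tfrac17\cdot\tfrac45=\tfrac{4}{35}>\tfrac1{10}$. When it does not $\mathsf{ABORT}$, all $\lceil\log_2 t\rceil$ rounds complete, and in the final round $2^i/t\ge1$, so $B_i=B$ and every $r\in R$ (which has a blue neighbour by definition of $R$) has an incident edge discovered. Forming the component graph $H$ on $\{C_1,\dots,C_t\}$ with one node per component and an edge for each discovered edge, and picking one discovered edge per $r\in R$, yields $|R|$ distinct $H$-edges (distinct red endpoints) in which each red component has degree $1$; this subgraph is a forest with $|R|$ edges, so $H$ has at most $t-|R|\le 7t/8$ connected components, and since $\GRone$ is correct the returned graph $G_2+\bigsqcup_i E_i$ has exactly this many components---every other run returns $\mathsf{ABORT}$.

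The combinatorial and colouring parts are routine; the delicate step is the budget analysis---establishing the ``$4C_{\GRone}M^\star+o(t)$'' bound on the loop's query count, which is where the hypothesis $t\ge n/\log n$ is genuinely used (turning the $\log n$ of~\Cref{lemma eobm} into $\Theta(\log t)$), and then pinning down the numerical constants (the $120$ in $\budget$, the thresholds $t/8$ and $25t$) so that the two failure probabilities still multiply to something strictly above $1/10$.
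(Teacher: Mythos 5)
Your proof follows the paper's argument essentially step for step: lower-bound $\ex{|R|}\ge t/4$ to control the colouring $\mathsf{ABORT}$, apply \Cref{lemma alg1property} with $t$ in place of $n$ to get $\ex{\sum_i X_i}\le 5t$, use Markov together with the observation that $\log n=\Theta(\log t)$ (plus the negligible cost of rounds with few edges, which the paper packages as \Cref{claim 3}) to show the budget $120\max(C_{\GRone},C_{\GRtwo})t$ is not exceeded, and conclude that a non-aborting run connects each of the $\ge t/8$ red representatives to a blue component, dropping the component count to $\le \tfrac78 t$. The only deviations are in bookkeeping, not in route: you combine the two failure events multiplicatively ($\tfrac17\cdot\tfrac45=\tfrac{4}{35}>\tfrac1{10}$, with the reverse-Markov bound $6/7$, which is in fact more careful than the paper's asserted $\pr{|R|\le t/8}\le\tfrac12$ followed by a union bound), you use the Markov threshold $25t$ instead of $15t$, and you fully analyze only the $\GRone$ instantiation, leaving the $O(\log t\cdot\log m/m)\le 1/100$ failure term for $\GRtwo$ (which the paper does include) as an asserted extension rather than a worked step.
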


%

\begin{proof}
	Let $C_1, \cdots, C_t$ be the connected components of $G_2$.~\Cref{alg RCC} begins by coloring the components red or blue uniformly at random. If a component is colored red/blue, we color all its vertices with the same color. Let $B$ be the set of vertices colored blue. For a red vertex $r$, we use $d(r)$ to denote the number of blue neighbors; recall, a neighbor is a vertex $b$ such that the weight $w(r,b) > 0$.
	Call a representative $r$ of a component ``good" if it is colored red and $d(r) \ge 1$. Note that a representative is good with probability $\geq 1/4$; its component has to be colored red and one of its
	neighbor's component has to be colored blue. Since it has at least one neighbor outside its component (since it's a representative), the probability follows. 
	Hence $\ex{|R|} \geq t/4$ and $\pr{|R| \le \frac{t}{8}} \le \frac{1}{2}$. Therefore, the probability~\Cref{alg RCC} aborts in~\Cref{RCC color end} is at most a half.
	
	Fix an iteration $i$.
	Let $X_i$ be the number of edges in $E(\mathcal{R}_i, B_i)$ which is a random variable. The induced subgraph $(\mathcal{R}_i, B_i)$ has at most $n$ vertices. 
	By~\Cref{lemma eobm} or~\Cref{lemma eochoi}, it takes at most $$\max(C_{\GRone}, C_{\GRtwo}) \cdot \left(\frac{X_i \log n}{\log X_i}\right)$$ queries to recover the edges in $E(\mathcal{R}_i, B_i)$. More precisely, if we run the deterministic algorithm from~\Cref{lemma eobm}, we recover these edges with probability $1$. If we run the efficient randomized algorithm from~\Cref{lemma eochoi}, we recover these edges with probability $1-O(\log m/m)$.
	
	By~\Cref{lemma alg1property}, $\ex{\sum_{0\le i \le \lceil\log t\rceil}X_i} = \sum_{e}\pr{e \in \cup_{i \in [\log t]} E_i} \le \sum_{r \in \mathcal{R}_0}\frac{5}{d(r)} \cdot d(r) = 5|\mathcal{R}_0| \le 5t$. By Markov's Inequality, $\pr{\sum_{0\le i \le \lceil\log t\rceil}X_i \ge 15t} \le \frac{1}{3}$. Conditioned on the event $\sum_{0\le i \le \lceil\log t\rceil}X_i \le 15t$, it takes
	\begin{align*}
		\sum_{i = 0}^{\lceil\log t\rceil} \max(C_{\GRone}, C_{\GRtwo}) \left(\frac{X_i \log n}{\log X_i}\right)
		&\le \max(C_{\GRone}, C_{\GRtwo}) \log n \sum_{i = 0}^{\lceil\log t\rceil}  \left(\frac{X_i}{\log X_i}\right)
		\\ &\le \max(C_{\GRone}, C_{\GRtwo}) \log n (4 \cdot \frac{15t}{\log t})\quad (\text{by~\Cref{claim 3}})
		\\ &\le 120 \max(C_{\GRone}, C_{\GRtwo}) t
	\end{align*}
	many queries to recover all edges with probability $\geq 1 - O(\log t\log m/m)$. Note we have multiplied by $\log t$ to account for the union bound over the $\log t$ different reconstructions.
	
	Since the $\budget$ is initialized to be $120\max(C_{\GRone}, C_{\GRtwo})t$, we can recover all the edges when $\sum_{i}X_i \le 15t$ with failure probability $O(\frac{\log t\log m}{m}) \le \frac{1}{100}$,
	for large enough $m$. Contrapositively, if we abort in~\Cref{ABORT budget}, either $\sum_i X_i > 15t$ (which is $\leq 1/3$) or due to the failure probabilities of the $\log t$ different invocations of $\mathsf{GR2}$ (which is $\leq 1/100$). 
	The overall $\mathsf{ABORT}$ probability can be union bounded by 
	\begin{align*}
		&\pr{\mathsf{ABORT} \text{ in~\Cref{RCC color end}}} + \pr{\mathsf{ABORT} \text{ due to} \sum_i X_i 
			\geq 15t} + \\ & \pr{\mathsf{ABORT} \text{ due to failures of } \mathsf{GR2}} \le \frac{1}{2} + \frac{1}{3} + \frac{1}{100} < \frac{9}{10}.
	\end{align*}
	
	If~\Cref{alg RCC} doesn't abort, $\mathcal{R}_i = \emptyset$ at the end since every representative in $R$ are connected to some component in $B$ after edge sampling. Since $|R| \ge \frac{t}{8}$, the number of connected components shrinks by $\geq t/8$, and therefore $G_2 + \bigsqcup_{i \in [\lceil\log t\rceil]} E_i$ has $\le \frac{7}{8}t$ connected components. 
	
	We still need to show the number of queries is $O(t)$. We can check if a representative has a blue neighbor using $1$ $\CROSS$ query and there can be at most $t$ representatives. In the sampling/recovery stage, the total number of queries is bounded by the initialized value of $\budget$ which is $120\max(C_{\GRone}, C_{\GRtwo})t$. Hence the algorithm uses at most $O(t)$ queries. 
\end{proof}
One can now easily obtain a zero-error algorithm by repeating if an ABORT is encountered.

\begin{algorithm}[ht!]
\caption{$\mathsf{ZeroErrorReduceCC}$}\label{alg ZERCC}
\begin{algorithmic}
\Require $G_2$ with $t$ connected components $C_1, \cdots, C_t$, each with $\ge 1$ representative.
\Ensure A graph with $\le \frac{7}{8}t$ connected components.
\While{True}
  \State $Result \gets \mathsf{ReduceConnectedComponents}(G_2)$ (\Cref{alg RCC}).
    \If{$Result$ is not $\mathsf{ABORT}$}
       \State \Return{$Result$}
    \EndIf
\EndWhile
\end{algorithmic}
\end{algorithm}

\begin{corollary}\label{corollary alg3property}
	Let $G = (V, E)$ be a weighted graph with non-negative edge weights. Given a subgraph $G_2$ of $G$ with $t$ connected components for some $t \ge \frac{n}{\log n}$,~\Cref{alg ZERCC} returns a graph with $\le \frac{7}{8}t$ connected components making $O(t)$ $\CROSS$ queries in expectation.
\end{corollary}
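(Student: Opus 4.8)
Given a subgraph $G_2$ of $G$ with $t \ge n/\log n$ connected components, $\mathsf{ZeroErrorReduceCC}$ (\Cref{alg ZERCC}) returns a graph with $\le \frac{7}{8}t$ connected components using $O(t)$ $\CROSS$ queries in expectation.

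The plan is to view \Cref{alg ZERCC} as repeated independent trials of \Cref{alg RCC}, and to use a standard ``geometric number of trials, each costing $O(t)$'' argument. First I would record the two facts we get from \Cref{lemma alg2property}: each single invocation of \Cref{alg RCC} makes $O(t)$ queries \emph{deterministically} (the total query count is capped by the initialized value of $\budget$, which is $\Theta(t)$, plus the $O(t)$ queries spent checking which red representatives have blue neighbors — so there is an absolute constant $K$ with every run using at most $Kt$ queries), and that each invocation fails to abort with probability at least $1/10$, in which case it correctly returns a graph with $\le \frac{7}{8}t$ components. I would also note that the trials are mutually independent, since each run of \Cref{alg RCC} re-randomizes the red/blue coloring and the sampled sets $B_i$ from scratch, and the only state passed between iterations of the while-loop is $G_2$ itself, which is unchanged.

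Next I would let $N$ be the number of times the while-loop body executes before $\mathsf{ReduceConnectedComponents}$ returns a non-$\mathsf{ABORT}$ result. Because each trial succeeds independently with probability $p \ge 1/10$, $N$ is stochastically dominated by a geometric random variable with success probability $1/10$, so $\ex{N} \le 10$. Since every trial costs at most $Kt$ queries regardless of outcome, the total number of $\CROSS$ queries is at most $Kt \cdot N$, hence $\ex{Kt\cdot N} = Kt\cdot\ex{N} \le 10Kt = O(t)$. (Because the per-trial cost is a deterministic upper bound and not merely an expectation, there is no subtlety about correlations between the cost of a trial and whether it is the last trial — Wald's identity is not even needed, the bound $\text{cost} \le Kt\cdot N$ holds pointwise.) Correctness is immediate: the loop only returns when a trial does not abort, and by \Cref{lemma alg2property} a non-aborting run of \Cref{alg RCC} always outputs a graph with at most $\frac{7}{8}t$ connected components. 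Termination with probability $1$ follows since $\pr{N = \infty} = \lim_{k\to\infty}(1-p)^k = 0$.

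The only point requiring a little care — and the closest thing to an obstacle — is justifying that the per-trial query bound is a \emph{worst-case} (not expected) $O(t)$; this is exactly why the algorithm was designed with the $\budget$ mechanism and the $\mathsf{ABORT}$ in \Cref{ABORT budget} of \Cref{alg RCC}, and why \Cref{lemma eochoi} was stated with an abort clause rather than as an unconditional reconstruction guarantee. Once that observation is in place, the rest is the routine geometric-series bookkeeping above. (One should also carry the hypothesis $t \ge n/\log n$ through, as it is what \Cref{lemma alg2property} needs — e.g.\ to ensure the failure probability of $\mathsf{GR2}$, which is $O(\log t\log m / m)$, is genuinely a small constant; but this is inherited verbatim from \Cref{lemma alg2property} and requires no new work here.)
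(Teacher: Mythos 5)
Your proof is correct and is exactly the argument the paper intends (the paper leaves it implicit after Lemma~\ref{lemma alg2property}, noting only that one repeats upon $\mathsf{ABORT}$): independent trials, each with worst-case cost $O(t)$ enforced by the $\budget$ cap plus the $O(t)$ representative checks, success probability at least $1/10$ per trial, hence a geometrically distributed number of trials with expectation at most $10$ and expected total cost $O(t)$. Your observation that the per-trial bound is deterministic, so no Wald-type argument is needed, is precisely the right point of care.
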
 

{\em Finishing up the Bor\r{u}vka style argument.} One can now use the above $\mathsf{ZeroErrorReduceCC}$ as a subroutine to get a zero-error randomized algorithm to find a spanning forest of an undirected weighted graph which makes $O(n)$ queries in expectation. We include the algorithm and analysis towards the final (maximal) spanning forest algorithm \footnote{The algorithm and analysis below are nearly identical to that in~\cite{ChakrabartyL23}}. 

\begin{algorithm}[ht!]
\caption{$\mathsf{SpanningForestAlgorithm}$}\label{alg SFA}
\begin{algorithmic}[1]
\Require $G = (V, E)$ with $w(e) \geq 0$ for all $e \in E$.
\Ensure A spanning forest of $G$.
\State $n \gets |V|$.
\State $i \gets 1$.
\State $G_i \gets (V, \emptyset)$.
    \For{$v \in V$} \label{Query1 start}
        \If{$|E(\{v\}, V\backslash v)| > 0$} 
            \State Mark $v$ as active.
        \Else \State Mark $v$ as inactive.
        \EndIf
    \EndFor \label{Query1 end}
\While{$i < 6\log \log n$}
\State   $G_{i + 1} \gets \mathsf{ZeroErrorReduceCC}(G_i)$ (\Cref{alg ZERCC}) \label{Query2}
\State   If any of old representatives in $G_{i}$ is still active, mark it as the next representative in $G_{i + 1}$. Otherwise keep marking vertices in $G_{i + 1}$ inactive until one finds an active vertex.  \label{Query1}
\State   $i \gets i + 1$
\EndWhile
\State $\mathcal{F} \gets$ Find the spanning forest of $G_i$ with $\mathsf{DFSSpanningForest}$ in Lemma~\ref{lemma dfs}. \label{Query3}
\State \Return{$\mathcal{F}$.}
\end{algorithmic}
\end{algorithm}

\begin{theorem} \label{theorem 15}
Let $G = (V, E, w)$ be a weighted graph in which we want to find a spanning forest where $|V| = n$. $V$ is known in advance but edges $E$ and edge weights $w \in (\mathbb{R}^+_0)^{|E|}$ are hidden.~\Cref{alg SFA} is a zero error algorithm that reconstructs a spanning forest of $G$ making $O(n)$ $\CUT$ queries in expectation. 
\end{theorem}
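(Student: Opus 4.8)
The plan is to put together the pieces already developed: the Borůvka-style outer loop of \Cref{alg SFA}, the per-phase guarantee of \Cref{corollary alg3property} ($O(t)$ expected $\CROSS$ queries to drop from $t$ to $\le \frac78 t$ components when $t \ge n/\log n$), the $\CROSS$-to-$\CUT$ simulation of \Cref{prop:1}, and the $\mathsf{DFSSpanningForest}$ subroutine of \Cref{lemma dfs} ($O(q\log n)$ $\CUT$ queries to finish once there are $q$ components). The structure of the argument is: (1) account for the initial active/inactive scan and the maintenance of representatives; (2) bound the queries in the main \texttt{while} loop by a geometric sum; (3) bound the queries in the final $\mathsf{DFSSpanningForest}$ call; (4) combine, and argue zero error.

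First I would handle the bookkeeping. Lines \ref{Query1 start}--\ref{Query1 end} make exactly $n$ $\CUT$ queries (one per vertex) to classify each vertex as active or inactive. For the representative maintenance in Line~\ref{Query1}: once a vertex is marked inactive it stays inactive forever (its only neighbors are inside its own component, and components only grow), so across the entire run each vertex is examined as a candidate representative at most once after it first goes inactive, plus a constant amount of work per component per phase to confirm the current representative is still active — this is $O(n)$ $\CUT$ (equivalently $\CROSS$) queries in total, summed over all phases. This justifies the claim in the outline that representatives are maintainable with $O(n)$ overhead, and lets us invoke \Cref{corollary alg3property} legitimately at the start of each phase (each component carries a representative with an edge leaving it, or else it is already a finished tree of the maximal forest and is dropped).

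Next, the main loop. Let $t_1 = n \ge t_2 \ge \cdots$ be the component counts at the start of each phase. By \Cref{corollary alg3property}, as long as $t_i \ge n/\log n$, phase $i$ uses $O(t_i)$ expected $\CROSS$ queries and guarantees $t_{i+1} \le \frac78 t_i$. Since $\frac78 < 1$, after $6\log\log n$ phases (or sooner) we have $t_i \le n \cdot (7/8)^{6\log\log n}$; choosing the constant so that this is $\le n/\log n$ — which holds since $(7/8)^{6\log\log n} = 2^{-\Theta(\log\log n)} = (\log n)^{-\Theta(1)}$ and the exponent exceeds $1$ for large $n$ — guarantees the loop terminates with at most $n/\log n$ components while every phase had $t_i \ge n/\log n$, so \Cref{corollary alg3property} applied throughout. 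The total expected query count is $\sum_i O(t_i) \le O(1)\sum_i n(7/8)^{i} = O(n)$ by the geometric series, and multiplying by the constant from \Cref{prop:1} keeps this $O(n)$ $\CUT$ queries. (One must also use linearity of expectation across phases; since each phase's expected cost is $O(t_i)$ conditioned on any outcome of earlier phases with $t_i$ components, the bound composes — this is the one place to be slightly careful, handling the conditioning by noting the per-phase bound of \Cref{corollary alg3property} is worst-case over the input graph with $t_i$ components.)

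Finally, Line~\ref{Query3} calls $\mathsf{DFSSpanningForest}$ on $G_i$ with $q \le n/\log n$ supervertices, costing $O(q\log n) = O(n)$ $\CUT$ queries by \Cref{lemma dfs}, and this returns the remaining edges of a spanning forest of the contracted graph; together with the learnt edges inside each component this is a maximal spanning forest of $G$. For zero error: \Cref{alg ZERCC} never returns $\mathsf{ABORT}$ (it loops until success), so every phase is correct with probability $1$; the $O(t)$ per-phase bound is in expectation but the output is always a correct spanning forest. Summing the three contributions — $O(n)$ for the initial scan and representative upkeep, $O(n)$ expected for the main loop, $O(n)$ for the finishing DFS — gives $O(n)$ expected $\CUT$ queries, completing the proof. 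The main obstacle, such as it is, is the conditioning subtlety in composing the per-phase expectations across phases whose component counts are themselves random; this is resolved by the fact that \Cref{corollary alg3property}'s bound holds uniformly over all inputs with $t$ components, so we may apply the tower rule phase by phase.
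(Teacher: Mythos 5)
Your proposal is correct and follows essentially the same argument as the paper: split the query cost into the initial active/inactive scan plus representative upkeep (charged $O(n)$ via the "inactive forever" observation), the Bor\r{u}vka phases bounded by the geometric sum $\sum_i (7/8)^i n = O(n)$ using \Cref{corollary alg3property}, and the final $\mathsf{DFSSpanningForest}$ call on at most $n/\log n$ components costing $O(n)$ by \Cref{lemma dfs}. Your extra care about composing per-phase expectations and the explicit zero-error remark are fine additions but do not change the route.
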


\begin{proof}
The queries we make in~\Cref{alg SFA} are from 3 parts: (1) calling~\Cref{alg ZERCC} in~\Cref{Query2}, (2) finding active vertices in line \ref{Query1 start} - \ref{Query1 end} and \ref{Query1}, and (3) finding the spanning forest with subgraph $G_i$ in~\Cref{Query3}.

\textbf{\Cref{Query2}:}
By~\Cref{corollary alg3property}, the total expected number of queries that used in calling~\Cref{alg ZERCC} is at most 
$\sum_{i = 0}^{6\log \log n} \left(\frac{7}{8}\right)^i n = O(n).$

\noindent
\textbf{Line \ref{Query1 start} - \ref{Query1 end} and \ref{Query1}:} To learn whether a vertex is active takes 1 query. Every time~\Cref{alg SFA} invokes~\Cref{alg ZERCC}, in each connected component, we find one ``active" vertex (if there is any) by iterating over vertices that are not ``inactive" at the end. Once a vertex is ``inactive", it becomes ``inactive" forever. The only vertices that we may query more than once are the representatives, in which case in the $i$th call of~\Cref{alg ZERCC} there are at most $(7/8)^in$ of them. So the total number of queries used to find representatives is at most $n + \sum_{i = 0}^{6\log\log n}(\frac{7}{8})^i{n} = O(n)$.\\

\noindent
\textbf{\Cref{Query3}:}  After $6\log\log n$ calls of~\Cref{alg ZERCC}, the number of connected components of $G_{6\log\log n}$ is at most $(\frac{7}{8})^{6\log \log n} \le \frac{1}{\log n}$. We apply~\Cref{lemma dfs} to find the remaining tree edges deterministically which takes $O(n)$ $\CROSS$ queries.
Summing up the queries from the three parts, we conclude the total expected number of queries is $O(n)$.
%
%
\end{proof}
\section{Deterministic Spanning Forest Algorithm}\label{sec:4}

We describe a polynomial time deterministic algorithm to learn a spanning forest of an undirected {\em unweighted} graph
which makes $O(\frac{n\log n}{\log\log n})$ queries, where $n$ is the number of vertices in $G$. 
The algorithm proceeds in two stages. In the first stage, the algorithm finds ``dense connected components'' in $O(\frac{n\log n}{\log\log n})$ queries
with the guarantee that the number of unknown edges {\em across} the components is ``small''. In the second stage, it learns {\em all} these edges by (i) shrinking all connected components
to be a ``super node'', (ii) using graph reconstruction algorithms to find {\em pseudo-edges} connecting these super-nodes, and then (iii) using ideas from coin-weighing algorithms
to recover the true edges. 

\subsection{Find Dense Connected Components.}
We first describe a deterministic algorithm which makes $O\left(\frac{n\log n}{\log\log n}\right)$ queries and returns a collection $\calC = (C_1, \ldots, C_k)$ of connected
components along with $E'\subseteq E$ which are spanning trees of each $C_i$. The guarantee is that the number of cross-edges $E[\calC]$ defined as $\{(u,v)\in E~:~ u\in C_i, v\in C_j, i\neq j\}$
is small. 

We maintain $A\subseteq V$ to be a set of ``active vertices'' which is initialized to $V$. 
These are the set of vertices from which we haven't begun exploring.

\begin{definition}
	For any vertex $v$, let $\resedge(v)$ be the set of edges between $v$ and the set of vertices $A$.
	Let $\resdeg(v) := |\resedge(v)|$ to be the number of these edges.
\end{definition}
We will use the following subroutines.
\bshouty*


\begin{lemma}~\citep[Paraphrasing][Theorem 3]{BshoutyM11}
There exists a polynomial time algorithm that reconstructs a weighted hidden graph $G = (V, E, w)$ where $w: E \mapsto \mathbb{R}^+$ using $O(\frac{m \log n}{\log m} + m \log \log m)$ $\ADD$ queries. 
\end{lemma}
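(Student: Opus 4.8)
The plan is to obtain this lemma as a direct restatement of Theorem~3 of \cite{BshoutyM11}. That theorem already provides a polynomial-time \emph{deterministic} algorithm which, given an $n$-vertex weighted graph with at most $m$ edges, reconstructs all edges and weights using $O\!\left(\frac{m\log n}{\log m} + m\log\log m\right)$ queries of exactly the additive type we have defined: a query on a subset $S$ returning the sum of weights of edges with both endpoints in $S$, i.e. our $\ADD$ query. So the only thing I would actually verify is that the query semantics and the parameter conventions in \cite{BshoutyM11} coincide with ours (their ``additive'' / summation queries versus our $\ADD$ queries), after which the bound is immediate. In this sense there is essentially nothing to prove beyond pointing at the cited result; this is the deterministic, polynomial-time analogue of the non-constructive \Cref{lemma bm} and the randomized \Cref{lemma choi}, at the price of the extra additive $m\log\log m$ term.

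The one loose end worth addressing is that Theorem~3 of \cite{BshoutyM11}, like \Cref{lemma bm,lemma choi}, presumes the algorithm is given (an upper bound on) $m$. For the use we make of it in the deterministic spanning-forest algorithm this is harmless, since the first stage already certifies an explicit upper bound on the number of cross-edges that we can simply pass in as $m$. If one instead wants a version oblivious to $m$, I would run the same doubling wrapper used in the proofs of \Cref{lemma eobm,lemma eochoi} in \Cref{Appendix:A}: try $m = 2, 4, 8, \ldots$, each time running the reconstruction with the corresponding query budget $C\!\left(\frac{m\log n}{\log m} + m\log\log m\right)$, and double the guess whenever the budget is exceeded or the returned edge set is inconsistent with the recorded query answers. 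Both $\frac{2^j\log n}{j}$ and $2^j\log\log(2^j)=2^j\log j$ grow at least geometrically in $j$, so the accumulated query cost over all guesses is within a constant factor of the cost at the correct value of $m$, which preserves the stated bound.

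I do not anticipate a genuine obstacle here; the only mild care needed is in translating the statement and notation of \cite{BshoutyM11} so that it is applied with the right query model and the right role of $n$ and $m$, together with the routine (and already-established, cf.~\Cref{Appendix:A}) observation that the doubling trick does not change the asymptotics.
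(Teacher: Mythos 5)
Your proposal is correct and matches the paper's treatment: the lemma is stated as a paraphrase of Theorem~3 of \citet{BshoutyM11}, and the paper gives no proof beyond the citation, exactly as you suggest. Your remark about knowledge of $m$ is also consistent with how the paper proceeds, since \Cref{lem:BM11} assumes $m$ is given and the deterministic algorithm supplies the explicit upper bound $m_H \le nL$ from the first stage, so no doubling wrapper is needed there.
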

Using~\Cref{prop:2}, we get
\begin{corollary}\label{lem:BM11}
Given $m \ge n$, there exists a polynomial time algorithm $\BMdet$ that reconstructs a weighted hidden graph $G = (V, E, w)$ where $w: E \mapsto \mathbb{R}^+$ using $O(n + \frac{m \log n}{\log m} + m \log \log m)$ $\CUT$ queries. 
\end{corollary}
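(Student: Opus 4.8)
The plan is to derive the corollary by combining the preceding lemma — the polynomial-time $\ADD$-query reconstruction algorithm of \citet{BshoutyM11} — with the generic simulation of $\ADD$ queries by $\CUT$ queries recorded in \Cref{prop:2}. First I would invoke that lemma to obtain a polynomial-time algorithm $\mathcal{A}$ which reconstructs $G = (V,E,w)$ (with $w\colon E\mapsto\mathbb{R}^+$) using
\[
t \;=\; O\!\left(\tfrac{m\log n}{\log m} + m\log\log m\right)
\]
$\ADD$ queries. Since $m \ge n$, this quantity is $\Omega(n)$; I note this now but will still carry the explicit $n$ term so that the statement matches the form in which it is used later.

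Next I would run $\mathcal{A}$, replacing each of its $\ADD$ queries by the $\CUT$-query simulation of \Cref{prop:2}. Recall that this simulation first spends $n$ $\CUT$ queries to record $\CUT(\{v\})$ for every $v\in V$ — a one-time cost independent of $\mathcal{A}$'s behaviour — and thereafter answers any query $\ADD(S)$ using a single additional $\CUT(S)$ query via the identity $\ADD(S) = \tfrac{1}{2}\big(\sum_{v\in V}\CUT(\{v\}) - \CUT(S)\big)$. Hence the total number of $\CUT$ queries made by the resulting algorithm $\BMdet$ is $n + t = O\!\left(n + \tfrac{m\log n}{\log m} + m\log\log m\right)$, which is exactly the claimed bound.

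Finally, I would observe that the transformation preserves polynomial running time: maintaining the table of the $n$ singleton-cut values and performing one arithmetic operation per simulated query adds only $\poly(n)$ overhead on top of $\mathcal{A}$'s already polynomial running time, and the output (the edge set $E$ together with weights $w$) is precisely what $\mathcal{A}$ returns. There is no substantive obstacle here; the only mildly delicate point worth a sentence is that $\mathcal{A}$ is adaptive, so the simulation must be interleaved with $\mathcal{A}$'s execution rather than applied to a fixed batch of queries — but since the $n$ singleton-cut queries can be front-loaded before $\mathcal{A}$ starts, this causes no difficulty. The corollary is thus a bookkeeping consequence of the preceding lemma and \Cref{prop:2}.
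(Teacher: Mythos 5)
Your proposal is correct and follows exactly the paper's route: the corollary is obtained by taking the $\ADD$-query algorithm of \citet{BshoutyM11} and simulating its queries with $\CUT$ queries via \Cref{prop:2}, with the front-loaded $n$ singleton queries accounting for the additive $n$ term. The paper states this in one line, and your more detailed bookkeeping (including the remark about adaptivity and polynomial-time preservation) is consistent with it.
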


\begin{lemma}\label{lem:googoo}
For any vertex $v$ and any subset $A\subseteq V$, the set of edges $\resedge(v)$ can be learnt using $O\left(\frac{\resdeg(v)\cdot \log n}{\log \resdeg(v)}\right)$ queries
via the deterministic algorithm $\Bshouty$ as described in~\Cref{lem:bshouty}.
\end{lemma}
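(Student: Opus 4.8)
The plan is to reduce the problem directly to coin-weighing reconstruction of a Boolean vector and then invoke $\Bshouty$ from~\Cref{lem:bshouty}. Since $G$ is unweighted, the neighborhood of $v$ inside $A$ is captured by the Boolean vector $x\in\{0,1\}^{A}$ defined by $x_u = \mathbbm{1}[(v,u)\in E]$; set $N := |A| \le n$ and note that $\|x\|_1 = \resdeg(v) =: d$, while learning $x$ is exactly the same as learning $\resedge(v)$. Thus it suffices to give sum-query access to $x$ at $O(1)$ cost per query in the $\CUT$ model and then run $\Bshouty$.

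Next I would observe that a sum-query on a subset $S\subseteq A$ returns $\sum_{u\in S} x_u$, and in the unweighted graph this is precisely the number of edges between $v$ and $S$, i.e. $\CROSS(\{v\},S)$ (we may assume $v\notin S$, dropping $v$ from the index set since $x_v$ is irrelevant). By~\Cref{prop:1} each such $\CROSS$ query is answered by $3$ $\CUT$ queries, so sum-query access to $x$ is available at a multiplicative overhead of $3$. I would also make one initial query $\CROSS(\{v\},A)$ to learn the value $d=\resdeg(v)$ exactly, which is the parameter $\Bshouty$ expects as input.

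Finally, feeding $d$, $N$, and the simulated sum-query oracle to $\Bshouty$ of~\Cref{lem:bshouty} reconstructs $x$, and hence $\resedge(v)$, using $O\!\left(d\log(N/d)/\log d\right)$ sum-queries, which translates to $O\!\left(\resdeg(v)\log n/\log \resdeg(v)\right)$ $\CUT$ queries after accounting for the factor-$3$ overhead and $N\le n$. The degenerate cases $\resdeg(v)\in\{0,1,2\}$, where the stated bound is vacuous or $\log\resdeg(v)$ vanishes, are handled separately: if $d=0$ the single query above already certifies $\resedge(v)=\emptyset$, and if $d\in\{1,2\}$ the at most two edges are recovered by binary search over $A$ using $O(\log n)$ queries, well within the intended bound.

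I do not anticipate a genuine obstacle here; the only points requiring care are that the sum-query simulation is \emph{exact} precisely because the graph is unweighted (so each present edge contributes weight exactly $1$), that one must spend an extra $\CROSS(\{v\},A)$ query to supply $d$ to $\Bshouty$, and the trivial small-$d$ bookkeeping above. If anything, the subtlety is purely notational: keeping straight that ``sum-queries on the hidden Boolean vector $x$'' and ``$\CROSS$ queries from $\{v\}$ into subsets of $A$'' are literally the same object.
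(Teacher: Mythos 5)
Your proposal is correct and follows essentially the same route as the paper: encode the neighborhood of $v$ inside $A$ as a Boolean vector, observe that sum-queries on it are exactly $\CROSS$ queries from $\{v\}$ into subsets of $A$ (simulable by $O(1)$ $\CUT$ queries), and invoke $\Bshouty$ from~\Cref{lem:bshouty} with $N\le n$. The extra bookkeeping you add (learning $d$ with one $\CROSS$ query and the small-$d$ cases) is harmless and only makes explicit details the paper leaves implicit.
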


\begin{proof}
	Once $v$ and $A$ are fixed, then one can think of learning $\resedge(v)$ as figuring out the following unknown vector $x\in \{0,1\}^{|A|}$
	which has a coordinate for every vertex in $A$. The $u$th coordinate is $1$ if $(v,u) \in E$ and $0$ otherwise.
	Note that querying $\sum_{u\in S} x_u$ for $S\subseteq A$ is precisely a $\CROSS$ query. The lemma follows from~\Cref{lem:bshouty}.
\end{proof}

Our algorithm proceeds by doing a breadth first search from active vertices only trying to connect to other active vertices only if $\resdeg(v)\geq L = \Theta(\log n/(\log\log n)^2)$. If no such vertex exists, then we the component discovered so far as one component. The full details are described in~\Cref{alg det}.

\begin{algorithm}[ht!]
\caption{$\mathsf{Connected Component Discovery}$}\label{alg det}
\begin{algorithmic}[1]
\Require A graph $G = (V, E)$ with $\CUT$ query access.
\Ensure Return $\calC = (C_1, \ldots, C_k)$, $E'\subseteq E$. Total number of queries: $O\left(\frac{n\log n}{\log \log n}\right)$.

\Comment{Guarantees: (i)	Each $C_i$ is connected using edges of $E'$, (ii) number of cross-edges $|E[\calC]|$
		is $O\left(\frac{n\log n}{(\log \log n)^2}\right)$}
\State $E' \gets \emptyset$; $A \gets V$. 

\State $L \gets \frac{\log n}{(\log \log n)^2}$. 
\State $\calC \gets \emptyset$.
\Comment{Set of connected components} 

\While{$A\neq \emptyset$}	\label{alg:outer-while}
\State Let $x \in A$ be an arbitrary active vertex.

\State Start new connected component $C \gets \{x\}$ and $A\gets A\setminus \{x\}$. 

\State $Q.\add(x)$ \Comment{Start BFS from $x$; $Q$ is a queue} 

\While{$Q\neq \emptyset$} \label{alg:inner-while}
\State $v \gets Q.\remove()$. 

\State $C \gets C \cup \{v\}$. \Comment{Add $v$ to the connected component.}

\State Query $\resdeg(v)$. \label{alg:query-degree}

\If{$\resdeg(v) \ge L$}
\State Use algorithm $\Bshouty$ to learn $\resedge(v)$.  \label{alg:query-cw}
\State $E' = E' \cup \resedge(v)$. 
\For{all $w$ such that $(v,w)\in E'$}
\State $Q.\add(w)$ and $A\gets A\setminus \{w\}$. 
\Comment{Add $w$ to $Q$ and mark it inactive}
\EndFor
\EndIf
\EndWhile

\State Add $C$ to $\calC$. \Comment{$C$ is connected; all vertices $v\in C$ have $\resdeg(v) < L$}.
\EndWhile
\State \Return{$\calC$.}
\end{algorithmic}
\end{algorithm}

\begin{lemma}\label{lem:det1}
	\Cref{alg det} makes $O(\frac{n\log n}{\log \log n})$ $\CUT$ queries and returns $\calC$ with $|E[\calC]| = O\left(\frac{n\log n}{(\log\log n)^2}\right)$.
\end{lemma}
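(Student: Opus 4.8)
The plan is to separately bound the two quantities in the lemma: the total number of $\CUT$ queries made by \Cref{alg det}, and the number of cross-edges $|E[\calC]|$ in the returned family $\calC$.

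\textbf{Bounding the query count.} Queries are made in exactly two places: \Cref{alg:query-degree}, which costs one query per vertex ever popped from $Q$, and \Cref{alg:query-cw}, which is invoked only when $\resdeg(v)\ge L$. Since every vertex $v$ is made inactive the moment it is placed in some $Q$, and a vertex is placed in a queue at most once, the total number of pops is at most $n$; hence \Cref{alg:query-degree} contributes $O(n)$ queries overall. For \Cref{alg:query-cw}, by \Cref{lem:googoo} the call costs $O\!\left(\frac{\resdeg(v)\log n}{\log \resdeg(v)}\right)$ queries. The key observations are: (i) the sets $\resedge(v)$ learned at distinct invocations are over \emph{disjoint} coordinate sets, because each vertex $w$ discovered is immediately removed from $A$, so $\sum_v \resdeg(v) \le n$ where the sum is over invocations of \Cref{alg:query-cw}; and (ii) for each such invocation $\resdeg(v)\ge L = \frac{\log n}{(\log\log n)^2}$, so $\log \resdeg(v) \ge \log L = \Omega(\log\log n)$. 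Combining, the total cost of all \Cref{alg:query-cw} calls is $O\!\left(\frac{\log n}{\log\log n}\sum_v \resdeg(v)\right) = O\!\left(\frac{n\log n}{\log\log n}\right)$, which dominates. (One should double-check the degenerate case $\resdeg(v)=1$ where $\log\resdeg(v)=0$; but the call only happens when $\resdeg(v)\ge L \gg 1$, so this is not an issue.)

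\textbf{Bounding the cross-edges.} A cross-edge is an edge $(u,v)$ with $u\in C_i$, $v\in C_j$, $i\ne j$. I claim every cross-edge has at least one endpoint $v$ that was popped from some queue with $\resdeg(v) < L$ \emph{at the moment it was popped}. Indeed, when a component $C$ is finalized and added to $\calC$, every vertex it contains was popped with residual degree below $L$ (that is the inner-while termination condition, and the comment on the last line of the inner block records this). Conversely, if a vertex $v$ were popped with $\resdeg(v)\ge L$, then \emph{all} its then-neighbors in $A$ get pulled into the same component $C$ as $v$, so no edge incident to $v$ among those is a cross-edge; and the edges from $v$ to vertices already removed from $A$ before $v$ was popped were, by the same reasoning applied to the earlier vertex, already accounted as internal to $v$'s component (this needs a short induction on pop order). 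So: charge each cross-edge to such a low-residual-degree endpoint. Each vertex $v$ is popped at most once, and when popped with $\resdeg(v)<L$ it is incident to fewer than $L$ edges going to the \emph{current} $A$; all other edges at $v$ go to already-removed vertices and are internal. Hence the number of cross-edges charged to $v$ is at most $\resdeg(v) < L$ at pop time (more precisely, a cross-edge $(u,v)$ with $u,v$ in different finalized components must have been an $A$-edge at the time the later-popped of $u,v$ was removed; assign it to that endpoint). Summing over all $n$ vertices gives $|E[\calC]| < nL = \frac{n\log n}{(\log\log n)^2}$, as claimed.

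\textbf{Main obstacle.} The routine part is the query count; the delicate part is the cross-edge bound, specifically making the charging argument airtight. The subtlety is that "residual degree at pop time'' changes as $A$ shrinks, and one must argue that an edge which is a cross-edge in the final $\calC$ was genuinely a "residual edge'' of its later-popped endpoint at the time of that pop — equivalently, that an edge internal to a component never later becomes a cross-edge. This follows because once two vertices are in the same $C$ they stay there, and $C$ is only ever enlarged, not split; but spelling out that the inner BFS explores $C$ to completion (so $C$ is a genuine connected component of $G[C]$ via $E'$) and that no vertex of $C$ has a neighbor left in $A$ when $C$ is finalized requires care. I would state it as: at the time $C$ is added to $\calC$, $\resdeg(v) < L$ for all $v\in C$ with respect to the then-current $A$, and $E[\calC]$ is exactly the union over finalized $C$ of edges from $C$ to the $A$ that existed when $C$ was finalized — then the per-vertex bound $< L$ and a union over the $\le n$ vertices closes it.
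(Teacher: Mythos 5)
Your query-count analysis is correct and is essentially the paper's argument: \Cref{alg:query-degree} costs $O(n)$ because each vertex is popped at most once, and each invocation of \Cref{alg:query-cw} costs $O\bigl(\resdeg(v)\tfrac{\log n}{\log L}\bigr)$ queries amortized against the $\resdeg(v)$ vertices it removes from $A$, giving $O\bigl(\tfrac{n\log n}{\log\log n}\bigr)$ in total. The cross-edge bound, however --- the part you yourself flag as delicate --- is where your argument breaks. Two of your supporting claims are false: (i) it is \emph{not} true that every vertex of a finalized component was popped with residual degree below $L$ (vertices popped with $\resdeg\ge L$ are exactly how the component grows; the algorithm's comment refers to residual degrees with respect to the $A$ at finalization time, not at pop time); and (ii) it is \emph{not} true that edges from a popped vertex to already-removed vertices are internal --- an edge to a vertex absorbed earlier into a \emph{different} component is precisely a cross-edge. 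Consequently your charging rule, which assigns a cross-edge to the \emph{later}-popped endpoint, does not admit the per-vertex bound $<L$: take a vertex $v$ adjacent to $M\gg L$ degree-one vertices $u_1,\dots,u_M$; if the algorithm happens to start its components at $u_1,u_2,\dots$ first, each $u_t$ is popped with $\resdeg(u_t)=1<L$ and finalized as a singleton, and when $v$ is finally processed it has $\resdeg(v)=0$, yet all $M$ cross-edges $(u_t,v)$ would be charged to $v$. The edges charged to the later endpoint are exactly the ones that are \emph{not} residual edges of that endpoint at its pop time, so $\resdeg(v)$ at pop time bounds nothing about them.

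The charging must go the other way, which is what the paper does: orient each cross-edge $(u,v)$ with $u\in C_i$, $v\in C_j$, $i<j$, toward the later component and charge it to $u$. Components are built one at a time, so every vertex of $C_j$ is still in $A$ throughout the construction of $C_i$; hence at the moment $u$ is popped, $(u,v)\in\resedge(u)$. If $\resdeg(u)\ge L$ at that moment, then \emph{all} of $u$'s residual neighbors --- including $v$ --- are pulled into $C_i$, contradicting $v\in C_j$; so $u$ was popped with $\resdeg(u)<L$, and the cross-edges charged to $u$ form a subset of that residual edge set, i.e.\ fewer than $L$ per vertex (zero if $\resdeg(u)\ge L$ at pop time). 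Summing over the at most $n$ vertices gives $|E[\calC]|<nL=\frac{n\log n}{(\log\log n)^2}$. Your conclusion is correct, and you do correctly observe that at least one endpoint of every cross-edge was popped with small residual degree, but the bookkeeping that turns this into a per-vertex bound is assigned to the wrong endpoint, so as written the proof of the second half of the lemma does not go through.
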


\begin{proof}
	Queries are only made in~\Cref{alg:query-degree} and in~\Cref{alg:query-cw} of~\Cref{alg det}. Amortized, the number of queries made in~\Cref{alg:query-degree} is at most $n$ since we do it whenever the queue is emptied
	and a vertex never re-enters $Q$ because it enters the queue as an inactive vertex. The queries of~\Cref{alg:query-cw} are made only when $\resdeg(v) \geq L$, and so 
	the number of queries is $O\left(\resdeg(v) \cdot \frac{\log n}{\log L}\right)$. We can charge $O\left(\frac{\log n}{\log L}\right) = O\left(\frac{\log n}{\log\log n}\right)$
	to every vertex $w$ that is being added to $C$ (alternately being deemed inactive). Since a vertex is added to $C$ at most once, amortized the total number of such queries is 
	at most $O(\frac{n\log n}{\log \log n})$.

    	Orient all the edges $(u,v)$ of $E[\calC]$ for $u\in C_i$ and $v\in C_j$ from $u$ to $v$ if $i < j$. That is, if $u$ is marked {\em inactive} earlier than $v$.
	Note that the out-degree of every such vertex is $< L$ because when it was made inactive, $\resdeg(v) < L$, and the final out-degree can only go down (maybe some active vertex was added to $C$ as BFS progressed). Since $|E[\calC]|$ is precisely the sum of these out-degrees, the lemma follows.
 
\end{proof}


\subsection{Recover True Edges between Components}

Consider the unknown undirected {\em weighted}/multi graph $H = (\calC, F)$ whose vertices are the components returned by~\Cref{alg det} and
we form $F$ by adding the pair $(C_i,C_j)$ for every $v_i\in C_i$ and $v_j\in C_j$ such that $(v_i,v_j) \in E(G)$. We call the edges in $F$ as ``pseudo-edges''.
We begin with a simple observation.

\begin{lemma}\label{lem:kiki}
	A $\CUT$ query in $H$ can be simulated by a $\CUT$ query in $G$. 
	$H$ is connected if and only if $G$ is connected.
\end{lemma}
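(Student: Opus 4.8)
The statement has two parts, and I would prove them in order. For the first part, I want to show that any $\CUT$ query in $H=(\calC,F)$ can be answered using a single $\CUT$ query in $G$. Fix a subset $\calS\subseteq \calC$ of supervertices, and let $S=\bigcup_{C_i\in\calS} C_i\subseteq V$ be the corresponding set of ground vertices. The claim I would establish is that $\CUT_H(\calS)=\CUT_G(S)$, where on the left the cut is taken in the (unweighted, multi) graph $H$ with pseudo-edges $F$, and on the right in $G$. The key point is the correspondence between pseudo-edges crossing $\calS$ and true edges crossing $S$: a pseudo-edge $(C_i,C_j)\in F$ has exactly one endpoint in $\calS$ iff exactly one of $C_i,C_j$ lies in $\calS$; and by construction $F$ contains $(C_i,C_j)$ once for \emph{every} ground edge $(v_i,v_j)\in E(G)$ with $v_i\in C_i$, $v_j\in C_j$, $i\neq j$. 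Hence the number of pseudo-edges (with multiplicity) in $\partial_H\calS$ equals the number of ground edges $(u,v)\in E(G)$ with $u\in S$, $v\notin S$, and $u,v$ in distinct components --- but ground edges inside a single component $C_i$ never cross $S$ (both endpoints are in or both out), so this last restriction is vacuous, and we get exactly $|\partial_G S|$, i.e. $\CUT_G(S)$ since $G$ is... wait, $G$ here may be weighted in general, but in this section $G$ is the unweighted graph of \Cref{sec:4}, so $\CUT_G(S)=|\partial_G S|$. Thus one $\CUT$ query to $G$ on the set $S$ returns precisely $\CUT_H(\calS)$.

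For the second part, I would argue $H$ is connected iff $G$ is connected. This follows from the first part together with the fact (guaranteed by \Cref{alg det}) that each $C_i$ is internally connected via the edges of $E'\subseteq E(G)$. Concretely: if $G$ is connected, take any two supervertices $C_i,C_j$; pick $u\in C_i$, $v\in C_j$, and a $u$-$v$ path in $G$; contracting each component to its supervertex maps this path to a walk in $H$ from $C_i$ to $C_j$, so $H$ is connected. Conversely, if $H$ is connected, take $u,v\in V$ lying in components $C_i,C_j$; a $C_i$-$C_j$ path in $H$ lifts to a sequence of ground edges connecting the components, and since each component is internally connected by $E'$, splicing these together yields a $u$-$v$ walk in $G$; hence $G$ is connected. (If $i=j$, internal connectivity of $C_i$ alone gives the path.)

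The routine verification is the bijection-with-multiplicity argument in the first part; the only place requiring a moment's care is making sure multiplicities are handled correctly --- $H$ is explicitly a multigraph, so the $\CUT$ value counts pseudo-edges \emph{with} multiplicity, which is exactly what matches the count of ground edges. I do not expect any genuine obstacle here; the lemma is essentially a bookkeeping statement, and the substance of the section lies in the subsequent steps that use it (reconstructing $H$ via graph-reconstruction algorithms and then recovering the \emph{true} edges via the coin-weighing subroutine $\Bshouty$). I would keep the proof to a few sentences: state $\CUT_H(\calS)=\CUT_G\!\left(\bigcup_{C_i\in\calS}C_i\right)$ with the one-line counting justification, then derive the connectivity equivalence from this identity plus internal connectivity of the $C_i$'s.
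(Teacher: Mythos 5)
Your proposal is correct and follows essentially the same route as the paper: both establish that $\partial_H(\calS)$ corresponds exactly (with multiplicity) to $\partial_G\bigl(\bigcup_{C_i\in\calS}C_i\bigr)$, so one $\CUT$ query to $G$ answers the $\CUT$ query to $H$, and then derive the connectivity equivalence from this correspondence together with the internal connectivity of each $C_i$ via $E'$. The only cosmetic difference is that you phrase the connectivity direction via contracting/lifting paths while the paper phrases it via empty cuts, which is an equivalent bookkeeping step.
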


\begin{proof}
	Given a subset $\mathcal{T} \subseteq \calC$, note that $\partial_H(T)$ is precisely $\partial_G(\bigcup_{C\in \mathcal{T}} C)$: the weight/number of parallel copies of $(C_i, C_j)$ for $C_i\in \mathcal{T}$ and $C_j\notin \mathcal{T}$ is precisely the number of edges of the form $(v_i,v_j)$ where $v_i\in C_i$ and $v_j\in C_j$. 
	Thus, if $H$ is disconnected, then $G$ is disconnected. If $H$ is connected, then $G$ is connected because every $C_i$ is connected in $G$.
\end{proof}

Using the fact that $H$ can be reconstructed using $O(\frac{m_H \log n}{\log m_H} + m_H\log\log m_H)$ deterministic $\CUT$ queries via~\Cref{lem:BM11} where $m_H$ is the total number of edges (even counting without multiplicity), and using the fact that $m_H \leq nL$, 
we get that $H$ can be reconstructed in $O(\frac{nL\log n}{\log (nL)} + nL\log \log (nL)) = O(nL + nL\log \log n) = O(\frac{n \log n}{\log \log n})$ queries. 
Thus, if we only cared to know if $G$ is connected or not, we would have our $O(n\log n/\log\log n)$ query algorithm by just checking if the reconstructed $H$ is connected or not.

However, the spanning forest of $H$ contains pseudo-edges of the form $(C_i, C_j)$, but we need to return true edges of $G$. We now describe the process to do the same.
We first begin with a simple observation.
\begin{lemma}\label{lem:simp-bins} 
Suppose $(C_i,C_j)$ is a pseudo-edge in $F$. Then, a true edge $(v_i,v_j)$ with $v_i\in C_i$ and $v_j\in C_j$ can be found in $O(\log|C_i| + \log|C_j|)$ $\CUT$ queries
using a binary-search style routine $\BS$.
\end{lemma}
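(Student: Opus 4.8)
The plan is to reduce the problem of finding a true edge $(v_i,v_j)$ to two separate binary searches, one inside $C_i$ and one inside $C_j$, each exploiting that a $\CUT$/$\CROSS$ query in $G$ can isolate a positive-weight edge. First I would fix the pseudo-edge $(C_i,C_j)\in F$; by definition this means $\CROSS(C_i,C_j) > 0$ in $G$ (there is at least one true edge between the two components). The goal is to locate one endpoint $v_i\in C_i$ and one endpoint $v_j\in C_j$ with $w(v_i,v_j) > 0$.

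The first search narrows down $C_i$. Order the vertices of $C_i$ arbitrarily and split $C_i$ into two halves $C_i^{(1)}, C_i^{(2)}$ of size $\approx |C_i|/2$. Query $\CROSS(C_i^{(1)}, C_j)$: if this is positive, recurse into $C_i^{(1)}$; otherwise, since $\CROSS(C_i^{(1)},C_j) + \CROSS(C_i^{(2)},C_j) = \CROSS(C_i,C_j) > 0$, we must have $\CROSS(C_i^{(2)},C_j) > 0$ and we recurse into $C_i^{(2)}$. (Here I use that weights are non-negative, so a zero $\CROSS$ value certifies the absence of edges in that part — this is exactly the place where non-negativity is used.) After $O(\log |C_i|)$ queries this isolates a single vertex $v_i\in C_i$ with $\CROSS(\{v_i\}, C_j) > 0$. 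By~\Cref{prop:1} each $\CROSS$ query costs $3$ $\CUT$ queries, so this stage is $O(\log|C_i|)$ $\CUT$ queries. Then the symmetric search inside $C_j$ against the fixed singleton $\{v_i\}$ — repeatedly halving $C_j$ and querying $\CROSS(\{v_i\}, C_j')$ — isolates a single $v_j\in C_j$ with $w(v_i,v_j) = \CROSS(\{v_i\},\{v_j\}) > 0$ in $O(\log|C_j|)$ further $\CUT$ queries. The pair $(v_i,v_j)$ is then a genuine edge of $G$, and the total cost is $O(\log|C_i| + \log|C_j|)$.

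I do not expect a serious obstacle here; the routine is a textbook binary search on the support of a non-negative vector, essentially the same idea already invoked in~\Cref{lemma dfs}. The one point worth stating carefully is the invariant maintained across the recursion — that the current candidate sublist of $C_i$ (resp. $C_j$) always has positive $\CROSS$ value against the fixed other side — which follows from non-negativity of weights together with additivity of $\CROSS$ over a partition of one side. The only mild subtlety is that $\CUT$ queries to $G$, not to $H$, are needed (one cannot split a supervertex of $H$), but that is automatic since $C_i, C_j$ and all their subsets are concrete vertex subsets of $V$, so all the $\CROSS$ queries above are legitimate queries to $G$.
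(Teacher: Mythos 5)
Your proposal is correct and matches the paper's own proof: the same halving binary search on $C_i$ against $C_j$ using the sign of $\CROSS$ queries (zero certifies no edge, by non-negativity and additivity), followed by the symmetric search inside $C_j$ against the found vertex $v_i$, for $O(\log|C_i|+\log|C_j|)$ queries in total.
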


\begin{proof}
	Indeed, for this we don't need the full power of $\CUT$ queries, but just knowing whether a cut is empty or not suffices.
	We take an arbitrary half of $A\subseteq C_i$ and perform a $\CROSS(A, C_j)$ query. If it is $> 0$, then we recurse on $(A, C_j)$
	and if it is $= 0$ we recurse on $(C_i\setminus A, C_j)$. In $\log |C_i|$ such queries, we discover $v_i \in C_i$ which has at least 
	one edge to some vertex in $C_j$. We then repeat the same process by taking an arbitrary half $B\subseteq C_j$ but not performing 
	$\CROSS(\{v_i\}, B)$. In $O(\log |C_j|)$ more queries we would recover the desired edge $(v_i, v_j)$.
\end{proof}
Now fix a spanning forest $\calF$ of $H$. For each $(C_i,C_j) \in \calF$ we want to recover a true edge $(v_i,v_j)$ with $v_i\in C_i$ and $v_j \in C_j$.
If the degree of every vertex $C_i \in \calF$ was $\leq D$, then we could simply use the algorithm in~\Cref{lem:simp-bins} to obtain all such true edges 
with number of queries equaling 
\[
\sum_{(C_i,C_j)\in \calF} O(\log |C_i| + \log |C_j|) \leq D\cdot \sum_{C_i \in \calF} \log |C_i| \underbrace{\leq}_{\textrm{AM-GM}} O(kD\log(n/k)) \leq O(nD)
\]
where $|\calC| = k$. So, if $D \leq \frac{\log n}{\log\log n}$, we would be done. However, $D$ could indeed be as large as $\Theta(n)$. To take care of this, 
we need to handle the ``high-degree'' vertices of $\calF$ separately, again using the fact that $\CUT$ queries give us more power. Here is the observation.
	Let $C_i$ be a vertex with degree $D \geq \frac{\log n}{\log\log n}$ in $\calF$. 
	Let $C_1, \ldots, C_D$ be its neighbors. Now consider the bipartite graph $H'$ whose vertices are the vertices in $C_i$ in one part
	and the vertices $\{C_1, C_2, \ldots, C_j\}$ in the other part, with an edge between $v_i\in C_i$ and $C_j$ iff there exists $v_j\in C_j$ such that
	$(v_i,v_j)\in E(G)$. We now use $\BMdet$ to learn all the edges in $H'$ deterministically. Thus, for all the pseudo-edges $(C_i,C_j)$ incident on $C_i$, we discover {\em one} endpoint (that lying in $C_i$) of $\geq D$ many true edges. This is how we take care of high-degree vertices. 
	The full algorithm details are given in~\Cref{alg join}.

\begin{algorithm}
	\caption{$\mathsf{Join Connected Components}$}\label{alg join}
	\begin{algorithmic}[1]
	\Require Unknown undirected graph $G = (V, E)$ with $\CUT$ query access; $\calC = (C_1, \ldots, C_k)$; $E'\subseteq E$ such that each $C_i$ is connected using edges in $E'$.
	\Ensure Return spanning forest of $G$. Total number of queries: $O\left(\frac{n\log n}{\log \log n}\right)$.
	
	\State Learn the pseudo-graph $H = (\calC, F)$ using $\BMdet$ in~\Cref{lem:BM11}. \label{alg:bm1}
	
	\State $\calF$ be an arbitrary spanning forest of $H$. 
	
	\State $E'' \gets \emptyset$.
	
	\For{$C_i \in \calF$ with degree $D > \frac{\log n}{\log\log n}$}
		
		\State Let $C_1, \ldots, C_D$ be neighbors of $C_i$ in $\calF$.
		
		\State Use $\BMdet$ to learn bipartite graph $H_i := (C_i, \{C_1, \ldots, C_D\})$. \label{alg:bm2}
		
		\State Replace $\{C_i\}$ in $\calF$ with $\{v: v\in C_i\}$ and edges $E'\cap E[C_i]$.
		
		\State Add minimal collection of edges from $H_i$ to ensure $C_i \cup \{C_1, \ldots, C_D\}$ is connected, to $E''$.
	\EndFor
	
	\Comment{Now all $C_i$'s in $\calF$ have degree at most $ \frac{\log n}{\log\log n}$.}
	
	\State \For{all remaining $(C_i,C_j)$ or $(C_i, v)$ pseudo-edges of $\calF$}
	
		\State Use $\BS$ to learn true edge $(v_i,v_j)$ or $(v_i, v)$ with $v_i\in C_i$ and $v_j\in C_j$, and add to $E''$.
	
	\EndFor
	
	\State \Return{$E' \cup E''$.}
	
\end{algorithmic}
\end{algorithm}

\begin{theorem}\label{thm:det-thm}
\Cref{alg det} and \Cref{alg join} is a deterministic algorithm that returns a maximal spanning forest of $G$ in $O\left(\frac{n\log n}{\log\log n}\right)$ queries.
\end{theorem}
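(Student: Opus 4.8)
The plan is to verify that the two guarantees claimed in Algorithms~\ref{alg det} and~\ref{alg join} combine correctly — correctness (the output is a maximal spanning forest) and the query bound $O(n\log n/\log\log n)$ — by bookkeeping the contributions of the three stages separately. First I would invoke~\Cref{lem:det1}: \Cref{alg det} uses $O(n\log n/\log\log n)$ $\CUT$ queries and produces components $\calC = (C_1,\ldots,C_k)$, each internally spanned by $E'$, with the crucial sparsity bound $|E[\calC]| = O(n\log n/(\log\log n)^2)$. Since the number of pseudo-edges in $H=(\calC,F)$ counted without multiplicity is at most $|E[\calC]|$, we have $m_H = O(n\log n/(\log\log n)^2)$, and also $m_H \le nL$ with $L = \log n/(\log\log n)^2$. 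Plugging into~\Cref{lem:BM11}, the reconstruction of $H$ in~\Cref{alg:bm1} costs $O(n + \frac{m_H\log n}{\log m_H} + m_H\log\log m_H)$; I would simplify this, using $\log m_H = \Theta(\log n)$ and $\log\log m_H = \Theta(\log\log n)$, to $O(n + nL + nL\log\log n) = O(n\log n/\log\log n)$. This handles stage one and the first line of~\Cref{alg join}.

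Next I would bound the cost of resolving pseudo-edges into true edges. For the high-degree vertices $C_i\in\calF$ with $\deg(C_i) = D_i > \log n/\log\log n$, the call to $\BMdet$ on the bipartite graph $H_i = (C_i,\{C_1,\ldots,C_{D_i}\})$ has $|C_i| + D_i \le n$ vertices and at most $|C_i|\cdot D_i \le n D_i$ edges (in fact far fewer — bounded by the number of true cross-edges incident to $C_i$, which is $O(L|C_i|)$ using the out-degree orientation from~\Cref{lem:det1}). I would use whichever bound makes the arithmetic cleanest: the number of edges in all the $H_i$'s together is $O(|E[\calC]|) = O(nL)$, so summing the $\BMdet$ costs over all high-degree $C_i$ gives $O(n + nL + nL\log\log n) = O(n\log n/\log\log n)$ again (the additive $O(n)$ per call is absorbed since there are at most $k/(\log n/\log\log n) = O(n/\log n\cdot\log\log n)$ high-degree vertices, contributing $O(n\log\log n/\log n) = o(n)$ total). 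For the remaining low-degree pseudo-edges, every vertex of $\calF$ now has degree $\le D := \log n/\log\log n$, so by~\Cref{lem:simp-bins} and the AM–GM computation already displayed in the excerpt, $\sum_{(C_i,C_j)} O(\log|C_i|+\log|C_j|) \le D\sum_{C_i}\log|C_i| = O(kD\log(n/k)) = O(nD) = O(n\log n/\log\log n)$; the pseudo-edges of the form $(C_i,v)$ created when expanding a high-degree $C_i$ are handled identically since $v$ is a singleton and contributes $O(\log|C_i|)$.

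For correctness I would argue as follows. Each $C_i$ is connected in $G$ via $E'$ (guarantee of~\Cref{alg det}). By~\Cref{lem:kiki}, $\partial_H(\mathcal T) = \partial_G(\bigcup_{C\in\mathcal T}C)$, so the connected components of $H$ are in bijection with those of $G$; hence a spanning forest $\calF$ of $H$ together with spanning trees of the $C_i$'s determines the component structure of $G$ exactly. It remains to check that $E'\cup E''$ is a forest spanning each component of $G$: $E'$ spans each $C_i$; for each edge of $\calF$ we add exactly one true $G$-edge (via $\BS$ in~\Cref{lem:simp-bins}, or a minimal connecting subset from the learned bipartite graph $H_i$ in the high-degree case), and since $\calF$ is acyclic and each added true edge connects the two (expanded) components corresponding to a $\calF$-edge, the union stays acyclic while achieving the same connectivity as $G$. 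Thus $E'\cup E''$ is a maximal spanning forest.

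The main obstacle I anticipate is not any single estimate but getting the high-degree case airtight: one must be careful that the bipartite graphs $H_i$ are genuinely sparse (so the $\BMdet$ cost is controlled) and that replacing $C_i$ by its vertices inside $\calF$ — and then feeding the newly created $(C_i,v)$-style pseudo-edges into the low-degree pass — does not blow up the degree bound or create cycles. I would resolve this by using the out-edge orientation of $E[\calC]$ from~\Cref{lem:det1} to bound $\sum_i (\text{edges of }H_i)$ by $|E[\calC]| = O(nL)$ globally rather than per-vertex, and by observing that after expansion each singleton $v$ has $\calF$-degree $1$, so the degree bound $D = \log n/\log\log n$ for the final pass is maintained. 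With these points pinned down, summing the three stage bounds gives the claimed $O(n\log n/\log\log n)$, completing the proof.
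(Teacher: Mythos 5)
Your overall plan is the same three-part accounting as the paper (cost of \Cref{alg det} via \Cref{lem:det1}, reconstruction of $H$ in \Cref{alg:bm1} via \Cref{lem:BM11}, then the high-degree $\BMdet$ calls plus the AM--GM bound for the low-degree binary searches), and your correctness discussion is a welcome addition since the paper's proof only tracks queries. But there is a gap at exactly the step where the paper does its only delicate work: the sum of the $\BMdet$ costs in \Cref{alg:bm2}. Each call on $H_i$ costs $O\bigl(n_i + \frac{m_i\log n_i}{\log m_i} + m_i\log\log m_i\bigr)$, and you bound the middle terms by claiming that $\sum_i m_i = O(nL)$ suffices. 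It does not: that inference needs $\log m_i = \Omega(\log n_i)$, whereas here $m_i$ can be as small as the degree threshold $\log n/\log\log n$ while $n_i = |C_i| + D_i$ can be polynomially large, so the factor $\frac{\log n_i}{\log m_i}$ can be as large as $\Theta\bigl(\frac{\log n}{\log\log n}\bigr)$ per call; from $\sum_i m_i = O(nL)$ alone you only get $O\bigl(nL\cdot\frac{\log n}{\log\log n}\bigr) = O\bigl(\frac{n\log^2 n}{(\log\log n)^3}\bigr)$, which overshoots the target. (For the single call on $H$ in \Cref{alg:bm1} the same shortcut is harmless because there $\frac{m_H\log n}{\log m_H} = O(n + m_H)$, but per-$H_i$ it is not.)

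The paper closes this with a case split that you need (or an equivalent): if $m_i \geq n_i$ then $\frac{\log n_i}{\log m_i}\le 1$ and the cost is $O(1)$ per edge of $H_i$, totalling $O(nL)$; if $m_i < n_i$ then one uses $m_i \geq \deg_{\calF}(C_i) > \frac{\log n}{\log\log n}$, hence $\log m_i = \Omega(\log\log n)$, and amortizes the resulting $O\bigl(\frac{n_i\log n}{\log\log n}\bigr)$ cost against the fact that the $C_i$ are disjoint and the forest degrees sum to $O(k)$, so $\sum_i n_i = O(n)$ and this case contributes $O\bigl(\frac{n\log n}{\log\log n}\bigr)$. You actually state both needed ingredients elsewhere in your write-up --- the degree threshold (used only to count the number of high-degree calls for the additive terms) and the orientation/disjointness bound --- but you never combine them to control $\frac{\log n_i}{\log m_i}$, so the decisive estimate is asserted rather than derived. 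With that one step repaired (either the paper's case split, or the bound $\frac{m_i\log n_i}{\log m_i} \le m_i + O\bigl(\frac{n_i}{\log\log n}\bigr)$ summed using $\sum_i m_i = O(nL)$ and $\sum_i n_i = O(n)$), the rest of your argument goes through and matches the paper's.
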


\begin{proof}
	We have already analyzed the query complexity of~\Cref{alg det} in~\Cref{lem:det1}.~\Cref{lem:kiki} implies that the number of edges in $H$ is $O\left(\frac{n\log n}{(\log\log n)^2}\right)$.
	The number of queries made in~\Cref{alg:bm1} of~\Cref{alg join} is therefore, as argued in the beginning of this subsection, $O(\frac{n \log n}{\log \log n})$.
	The number of queries made in~\Cref{alg:bm2} of~\Cref{alg join} is $O(\frac{m_i\log n_i}{\log m_i} + m_i\log\log m_i)$ where $m_i$ is the number of edges in the bipartite graph $H_i$ and $n_i$ is
	the number of vertices. The second term, summed over all calls, and using the fact that $\sum_i m_i = O(nL)$, would together be at most $O(n\log n/\log\log n)$.
	The first term, if $m_i \geq n_i$ would give $O(1)$ per edge, which again amortized would be $O(nL)$. If $m_i \ll n_i$, then using $m_i \geq \log n/\log\log n$, 
	we would still pay $O(\frac{\log n}{\log\log n})$ per true edge added to $E''$ at that round. Since the total number of true edges in $E''$ is at most $n-1$, we 
	would in all pay $O(n\log n/\log\log n)$. 
\end{proof}

\section{Conclusion}

We give an optimal Las Vegas algorithm style that recovers a maximal spanning forest of an unknown, weighted, undirected graph using $O(n)$ queries in expectation. The algorithm uses~\cite{Apers}'s framework, and extends the $O(n)$ zero-error algorithm in unweighted graph to weighted case. It also closes the gap between the lower bound result of $\Omega(n)$ in~\cite{AuzaL21} and previous upper bound of $O(n \log n)$. A key ingredient of the improvement is that we circumvent the degree estimation, which is possible not an $O(1)$-query operation in weighted graphs given the results of~\cite{chakraborty2022support}. We do this by removing the high degree representatives step by step using random sampling with a sample rate increasing from close to zero to one. 

For unweighted graphs, we give an $O(n\log n/\log\log n)$-query poly-time deterministic algorithm slightly improving the state-of-the-art. Our algorithm doesn't work with weights yet again due to the need for knowing degree (\Cref{alg:query-degree} in~\Cref{alg det}), and we don't know how to bypass it without randomization. However, the more interesting question is to obtain
$O(n)$ query deterministic algorithms even for unweighted graphs, or prove its impossibility. 

\bibliographystyle{plainnat}
\bibliography{main}

\newpage 

\appendix

\section{Missing Proofs}\label{Appendix:A}

\begin{proposition} ~\citep[from][Claim 3]{ChakrabartyL23} \label{claim 3} 
	Let $t_1, t_2, \cdots, t_j$ be integers at least 2. If $\sum_{i = 1}^{j} t_i = t$ where $\frac{n}{\log n} \le t \le n$ and $j \le \log n$, then $\sum_{i = 1}^{j} \frac{t_i}{\log t_i} \le \frac{4t}{\log t} = O(\frac{t}{\log t})$.
\end{proposition}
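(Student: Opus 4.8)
The plan is to split the index set at the threshold $\sqrt{t}$. Write $I_{\mathrm{big}} = \{\, i : t_i \ge \sqrt{t}\,\}$ and $I_{\mathrm{small}} = \{\, i : 2 \le t_i < \sqrt{t}\,\}$. The point of this split is that on $I_{\mathrm{big}}$ the denominators are already within a factor $2$ of $\log t$, so that part is trivially $O(t/\log t)$; and $I_{\mathrm{small}}$ contains at most $j \le \log n$ indices, each contributing a summand of size only $O(\sqrt t)$, so the whole small part is $O(\sqrt t\log n)$, which the hypothesis $t \ge n/\log n$ forces down to $o(t/\log t)$.

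Concretely, for $i \in I_{\mathrm{big}}$ we have $\log t_i \ge \log\sqrt t = \tfrac12\log t$, hence
\[
\sum_{i\in I_{\mathrm{big}}}\frac{t_i}{\log t_i} \;\le\; \frac{2}{\log t}\sum_{i\in I_{\mathrm{big}}} t_i \;\le\; \frac{2t}{\log t}.
\]
For $i\in I_{\mathrm{small}}$ we use $t_i\ge 2$, so $\log t_i\ge \log 2$, and $t_i<\sqrt t$, giving $\frac{t_i}{\log t_i} < \frac{\sqrt t}{\log 2}$; since $|I_{\mathrm{small}}|\le j\le \log n$ the small part is at most $\frac{\sqrt t\log n}{\log 2}$. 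Now I would invoke $t\ge n/\log n$: it gives $\sqrt t\ge \sqrt{n/\log n}$ and $\log t\le \log n$, so $\frac{\log n\cdot\log t}{\sqrt t}\le \frac{(\log n)^2}{\sqrt{n/\log n}}\to 0$, i.e.\ $\sqrt t\log n = o(t/\log t)$; in particular the small part is at most $\frac{2t}{\log t}$ for all $n$ past an absolute constant. Adding the two estimates gives $\sum_i\frac{t_i}{\log t_i}\le \frac{4t}{\log t}$ for large $n$, and for the finitely many smaller values of $n$ there are only $O(1)$ terms so the bound $O(t/\log t)$ holds there too.

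The only genuine subtlety is that \emph{both} hypotheses are needed: if $j$ were unrestricted, taking every $t_i=2$ gives $\sum_i\frac{t_i}{\log t_i}=\Theta(t)$, which is nowhere near $O(t/\log t)$; and the lower bound $t\ge n/\log n$ is exactly what prevents the (up to $\log n$ many) small summands, each possibly as large as $\Theta(\sqrt t)$, from swamping $t/\log t$. So the ``hard part'' here is not an estimate but the bookkeeping choice: pick the cutoff so the big part spends at most half of the $\tfrac{4t}{\log t}$ budget while the cheap-but-numerous small part stays under control via $j\le \log n$ together with $t\ge n/\log n$. The threshold $\sqrt t$ does both, and any cutoff of the form $t^{c}$ with $0<c<1$ would serve equally well.
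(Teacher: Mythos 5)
Your proof is correct and follows essentially the same route as the paper's: split the $t_i$ at a size threshold, bound the large terms using $\log t_i \ge \tfrac12\log t$, and bound the few ($\le \log n$) small terms using the hypothesis $t \ge n/\log n$. The only difference is bookkeeping — the paper cuts at $t/\log^2 t$ so the small part is bounded by $2t/\log t$ directly, whereas your cutoff at $\sqrt t$ handles the small part asymptotically; both arguments (the paper's included) really need $n$ beyond an absolute constant for the stated constant $4$.
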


\begin{proof}
Partition $\{t_i\}$ into sets $P_1 = \{t_i | t_i \in [2, t/\log^2 t)\}, P_2 = \{t_i | t_i \in [t/\log^2 t, t]\}$. 
For $t_i \in P_1$, note $|P_1| \le \log n \le 2\log t$, it follows that $$\sum_{i \in P_1} \frac{i}{\log i} \le \sum_{i \in P_1} i \le 2\log t \cdot \frac{t}{\log^2 t}  = \frac{2t}{\log t}.$$
For $t_i \in P_2$, since $t_i > \sqrt{t}, \log t_i \ge \frac{1}{2}\log t$. So $t_i/\log t_i \le t_i/\frac{1}{2} \log t = \frac{2t_i}{\log t}$. For those $t_i$, $$\sum_{i \in P_2} \frac{i}{\log i} \le \sum_{i \in P_2} \frac{2i}{\log t} \le \frac{2t}{\log t}.$$
Combine the two inequalities, $$\sum_{i \in P} \frac{i}{\log i} = \sum_{i \in P_1} \frac{i}{\log i} + \sum_{i \in P_2} \frac{i}{\log i} \le \frac{4t}{\log t} = O(\frac{t}{\log t}).$$
\end{proof}

\begin{theorem}(\Cref{lemma eochoi} in the main body)~\citep[Follows from][Theorem 1]{Choi13} . 
	There exists an adaptive, randomized algorithm $\GRtwo$ which takes input a bipartite graph $G = (U, V)$ on $n$ vertices and $m$ edges such that $U$ and $V$ are known but $m$ is unknown to the 
	algorithm, and either reconstructs the edges of $G$ along with their weights, or aborts. The probability that the algorithm aborts or makes more than $\frac{C_{\GRtwo}m\log n}{\log m}$ $\CROSS$ queries, for some absolute constant $C_{\GRtwo} > 0$, is at most $O(\frac{\log m}{m})$.
\end{theorem}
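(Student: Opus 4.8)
The plan is a doubling trick. The algorithm $\GRC$ of~\Cref{lemma choi} succeeds whenever it is handed a valid upper bound on the number of edges, so I would run it with the guesses $\hat m_j := 2^j$ for $j=1,2,\dots$ and stop as soon as the returned candidate graph can be \emph{verified} to be correct. The one ingredient that makes this clean for weighted graphs is that, because every edge weight is strictly positive, verification is \emph{deterministic} and costs only $O(|\hat E_j|)$ $\CROSS$ queries at stage $j$, which will telescope into the desired bound.

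In detail: for $j=1,2,\dots,\lceil 2\log_2 n\rceil$, invoke $\GRC$ on $G=(U,V)$ with parameter $\hat m_j = 2^j$, aborting that single invocation (and moving on to $j{+}1$) if it ever makes more than $C\cdot 2^j\log n/j$ $\CROSS$ queries or outputs more than $2^j$ edges; thus stage $j$ costs $O(2^j\log n/j)$ queries unconditionally. Let $\hat E_j$ with claimed weights $\hat w$ be its output. Verify $\hat E_j$ by: (i) querying $\CROSS(\{u\},\{v\})$ for every $(u,v)\in\hat E_j$ and checking it equals $\hat w(u,v)$ --- this is $|\hat E_j|\le 2^j$ queries, and on success it certifies that $\hat E_j\subseteq E$ with matching weights; and (ii) querying $\CROSS(U,V)$ once and checking it equals $\sum_{(u,v)\in\hat E_j}\hat w(u,v)$ --- if (i) and (ii) both pass, then the total weight of $E\setminus\hat E_j$ is $0$, and since every edge of $G$ has positive weight this forces $E=\hat E_j$. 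If verification passes, output $\hat E_j$ and halt; if all $j$ are exhausted without halting, $\mathsf{ABORT}$.

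Now the analysis. Verification is deterministic and never false-accepts, so $\GRtwo$ never returns an incorrect graph. Because we cap $|\hat E_j|\le 2^j$, no candidate with $2^j<m$ can equal $E$, so the earliest stage at which $\GRtwo$ can halt is $j^{\star}:=\lceil\log_2 m\rceil$. At that stage $\hat m_{j^{\star}}\in[m,2m)$, so the hypothesis of~\Cref{lemma choi} holds and $\GRC$ returns $E$ exactly with probability $1-O(\log\hat m_{j^{\star}}/\hat m_{j^{\star}})=1-O(\log m/m)$, whereupon verification passes and $\GRtwo$ halts. When $\GRtwo$ halts at $j^{\star}$ its total query count is $\sum_{j\le j^{\star}}O(2^j\log n/j)=O(2^{j^{\star}}\log n/j^{\star})=O(m\log n/\log m)$ (the summand roughly doubles each step), so picking $C_{\GRtwo}$ to be that implied constant, both bad events --- ``$\GRtwo$ aborts'' and ``$\GRtwo$ makes more than $\tfrac{C_{\GRtwo}m\log n}{\log m}$ queries'' --- are contained in the event that $\GRtwo$ fails to halt at stage $j^{\star}$, which in turn is contained in the event that the single call to $\GRC$ at stage $j^{\star}$ fails. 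Hence $\pr{\GRtwo\text{ aborts or exceeds its query budget}}\le O(\log m/m)$. The deterministic statement~$\GRone$ (\Cref{lemma eobm}) follows the same way, using the zero-error reconstruction of~\Cref{lemma bmcross} in place of $\GRC$, with no probabilistic bookkeeping at all.

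The step I expect to be the crux is the verification: one must notice that positivity of the weights lets a \emph{single} additional $\CROSS(U,V)$ query certify that no undiscovered edge exists (a total-weight argument), so that verification is cheap enough to telescope, and that since verification never false-accepts and can only succeed once $2^j\ge m$, the entire failure probability collapses onto the one ``correctly parameterized'' call to $\GRC$. Everything else is routine bookkeeping.
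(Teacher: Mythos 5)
Your proposal is correct and follows essentially the same route as the paper's proof: a doubling trick over guesses $2^j$ for the unknown $m$, verification of each candidate by querying the claimed edges individually plus one total-weight $\CROSS(U,V)$ check (using positivity of weights to certify no edge was missed), and collapsing the failure probability onto the single call with the first guess $2^{j^\star}\ge m$, with the geometric sum giving the $O\left(\frac{m\log n}{\log m}\right)$ query bound. The only cosmetic difference is your explicit per-stage query and edge-count caps, which the paper handles implicitly via the stated query complexity of $\mathsf{GRC}$.
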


\begin{proof}
	We run $\Choi$ in~\Cref{lemma choi} with number of edge guesses $2^1, 2^2, \cdots$ growing in powers of $2$. Let $C_{\GRC}$ be the hidden constant in the query complexity of~\Cref{lemma choi}. After guessing $2^i$, $\Choi$ terminates with $C_{\GRC} \cdot \frac{2^i \log n}{\log 2^i}$ queries. If $\Choi$ returns recovered edges, we check 
    \begin{enumerate}[noitemsep]
        \item If the edges indeed exist in $G$.
        \item If all the edges are recovered.
    \end{enumerate}
	
	For a guess $2^i$, checking the recovered edges are indeed in $G$ takes at most $2^i$ queries. Checking if all edges are recovered takes 1 $\CROSS$ query because $G$ is bipartite and it takes 1 query to get the sum of all edge weights. One then check if the sum of the recovered edge weights equal to the sum of actual edge weights.
	
	When the guess $2^i \ge m$, running $\Choi$ with guess $2^i$ fails with probability $O(\frac{\log m}{m})$. Hence the probability that the algorithm stops at guess $2^{i^*}$ where $i^*$ is the smallest $i$ such that $2^i \ge m$ is $1 - O(\frac{\log m}{m})$. Conditioned on the algorithm stops on the guess $2^{i^*}$, we show that the total number of queries is $\sum_{i = 1}^{i^*}C_{\GRC} \cdot \frac{2^i\log n}{\log{2^i}} \le 16C_{\GRC} \cdot \frac{m\log n}{\log m}$.

\begin{align*}
    \sum_{i = 1}^{i^*} \left(C_{\GRC} \cdot \frac{2^i\log n}{\log 2^i} \right) &\le 4C_{\GRC} \cdot \left(\frac{2^{i^* + 1}\log n}{\log 2^{i^* + 1}} \right) \quad \text{(by Proposition~\ref{claim 3})}
    \\ &\le 16C_{\GRC} \cdot \left(\frac{2^{i^* - 1}\log n}{\log 2^{i^* + 1}} \right)
    \\ &\le 16C_{\GRC} \cdot \left(\frac{m\log n}{\log m} \right)
\end{align*}

Set $C_{\GRtwo}$ to be $16C_{\GRC}$ and we are done. 

 The proof of~\Cref{lemma eobm} is similar to the proof of~\Cref{lemma eochoi}.
\end{proof}


\end{document}